\newtheorem{theorem}{Theorem}[section]
\newtheorem{remark}[theorem]{Remark}
\newtheorem{Corollary}[theorem]{Corollary}
\numberwithin{equation}{section}
\begin{document} \parskip=5pt plus1pt minus1pt \parindent=0pt
\begin{frontmatter}
\title{SEIRS epidemics in growing populations}
\author[su]{Tom~Britton}
\ead{tom.britton@math.su.se}
\author[uo1]{D\'esir\'e~Ou\'edraogo\corref{cor1}}
\ead{dsrdrg@gmail.com}
\cortext[cor1]{Corresponding author}
\address[su]{Department of Mathematics, Stockholm University, SE-106 91 Stockholm, Sweden}
\address[uo1]{Laboratoire de Math\'{e}matiques et Informatique (LAMI), EDST,  Universit\'e Ouaga I Pr. Joseph Ki-Zerbo 03 B.P.7021 Ouagadougou 03, Burkina Faso}
\date{\today}

\begin{abstract} 
An SEIRS epidemic with disease fatalities is introduced in a growing population (modelled as a super-critical linear birth and death process). The study of the initial phase of the epidemic is stochastic, while the analysis of the major outbreaks is deterministic. Depending on the values of the parameters, the following scenarios are possible. i) The disease dies out quickly, only infecting few; ii) the epidemic takes off, the \textit{number} of  infected individuals grows exponentially, but the \textit{fraction} of infected individuals remains negligible; iii) the epidemic takes off,  the \textit{number} of infected grows initially quicker than the population, the disease fatalities diminish the growth rate of the population, but it remains super critical, and the \emph{fraction} of infected go to an endemic equilibrium; iv) the epidemic takes off, the \textit{number} of infected individuals grows initially quicker than the population, the diseases fatalities turn the exponential growth of the population to an exponential decay. 
 \end{abstract}
\begin{keyword}
SEIRS epidemic\sep threshold quantities\sep initial growth\sep endemic level
\end{keyword} 
\end{frontmatter}
\section{Introduction}\label{sec-intro}
Infectious diseases remain a threat for developing countries as well as for developed countries. Many mathematicians focus their efforts to understand the dynamics of infectious diseases, in order to find the conditions to eradicate them. In mathematical modelling of infectious disease epidemics, the population in which the disease is spreading is partitioned in several compartments according to the status of the individuals, related to the disease. Every epidemic model has at least, the compartment $I$ of the infectious individuals who are infected and able to transmit the disease to others through contact, and the compartment $S$  of the susceptible individuals (those who are not infected but may be infected if they contact an infectious individual).  Two other compartments often used are the compartment  $E$ of the exposed or latent individuals who are already infected but not yet able to transmit the disease to others, and the compartment $R$ of the recovered or removed individuals (those who are healed from the disease with a permanent or non-permanent immunity). In a $SEIR$ epidemic, a susceptible individual infected through a contact with an infectious, becomes infected and latent; at the end of the latent period he/she becomes infectious and at the end of the infectious period he/she recovers with a life-long immunity. An $SEIRS$ epidemic is almost the same as the preceding, the only difference is that a recovered individual has a non-permanent immunity (He/she can lose his immunity, becoming susceptible again). Diphtheria, influenza and pneumonia are examples of diseases with latent period and non-permanent immunity \cite{Greenhalgh}. 

In \cite{Tom},  Britton and  Trapman studied stochastic SIR and SEIR models in a growing population. They derived the basic reproduction number and the Malthusian parameter of the epidemic,  stated results for the initial phase and showed that the stochastic proportions process converges to a deterministic process.

In \cite{Greenhalgh}, Greenhalgh studied an SEIRS deterministic model with vaccination and found that under some conditions, the solution has Hopf bifurcations. 

The aim of this paper is to study the dynamic of a stochastic SEIRS epidemic model with disease induced mortality, in an exponentially growing population. As in \cite{Tom} , we assume that  without the disease, the population has a  birth rate $\lambda$, and a natural death rate $\mu$, such that $\lambda>\mu$. That is, initially the population process is a super-critical linear birth and death process. An SEIRS epidemic is introduced by infecting one individual.  With the disease, the population is divided in four compartments according to the status of the individuals, related to the disease. The compartment S of the susceptible individuals, the compartment E of the latent or exposed individuals , the compartment I of the infectious individuals , and the compartment R of the removed individuals (those who are healed of the disease with a non-permanent immunity). The process is initiated by setting $(S(0),E(0),I(0),R(0))=(n-1,1,0,0)$. The transfer diagram of the model is given by Figure \ref{SEIRSdiagram}.

We derive the Malthusian parameter $\alpha$, the basic reproduction number $R_0$ and the probability of minor outbreak $\pi$ of the epidemic. If $R_0\leq1$, then the disease cannot invade the population, that remains a super critical process. If $R_0>1$, then the epidemic has a positive probability $1-\pi$ of taking off, with the remaining probability $\pi$, it dies out. If the epidemic takes off, another threshold parameter $R_1$ determines the behavior of the proportion of infected individuals. If $R_1\leq1$, then the \textit{fraction} of infected stays small; while it persists when $R_1>1$. If $R_1>1$, or equivalently $\alpha>\lambda-\mu$, then the \textit{number} of the infected grows initially quicker than the population, the disease fatalities diminish the growth rate of the population. In this case the asymptotic behavior of the population rely on a third threshold quantity $R_2$. If $R_2>1$, then the population goes on growing, while it becomes a sub-critical process when $R_2\leq1$. In the latter case, when the number of individuals become low, the population should vanish with the disease, or regrows after the extinction of the epidemic.  
 
 We start by defining the stochastic model in Section $2$. Then, in Section $3$, we study the initial phase of the epidemic, thereafter we consider the deterministic model in Section $4$. Afterward,  we give some illustrations by simulating different scenarios of epidemics in Section $5$. In Section $6$, we conclude the paper and discuss some perspectives.  
\section{The model}\label{sec-models}
\subsection{The initial dynamic of the population}
Initially (before the introduction of the disease), the population model is a linear birth and death (B-D) process with individual birth rate $\lambda$ and individual death rate $\mu$. We assume that $\lambda>\mu$, that is the process is super-critical. $N(t)$ denotes the \textit{number} of individuals in the population at time $t$. 
\begin{figure}[t]
\centering
\includegraphics[height=7 cm, width=12cm]{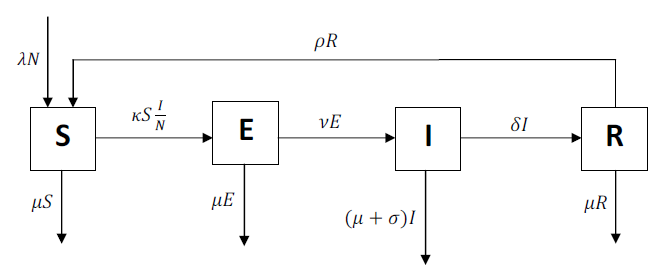}
\caption{The transfer diagram of the SEIRS model}\label{SEIRSdiagram}
\end{figure}
\subsection{The Markovian SEIRS epidemic model}
Now, we define a uniformly mixing Markovian epidemic model on the population described above, implying that individuals give birth at rate $\lambda$ and die from other causes at rate $\mu$, irrespective of disease status.
Initially (at $t=0$), the population consists of $n$ susceptible individuals. At this time, an SEIRS infectious  disease is introduced by infecting one individual. The disease spreading is modelled as follow. An infected individual remains latent (infected but not yet infectious) for exponential time with rate $\nu$. After this period, he/she becomes infectious (unless he/she dies before). An infectious individual remains infectious for an exponential time with rate $\delta$ (unless he/she dies before). The disease  induces an additional death rate  $\sigma$ for the infectious individuals. After the period of infectiousness, an infectious individual recovers with a temporary immunity. A recovered person remains immune for an exponential time with rate $\rho$ unless he/she dies before. After the period of immunity he/she becomes susceptible again. During the infectious period, the infective has infectious contacts randomly in time according to a homogeneous Poisson process with rate $\kappa$, each time with a uniformly selected random individual. If the contacted person is susceptible, he/she becomes infected and latent (not yet infectious), otherwise the contact has no effect. There is no vertical transmission, that is all the new born individuals are susceptible.

Let $Z(t) = (S(t),E(t),I(t),R(t))$ respectively denote the \textit{number} of susceptible, latent, infectious and immune individuals at time $t$. Therefore, the population size at time $t$ is $N(t)=S(t)+E(t)+I(t)+R(t)$  (Wherever $n$ is important an $n$-index is added). The population is initiated at $Z_n(0) = (S_n(0),E_n(0),I_n(0),R_n(0)) = (n-1,1,0,0)$. However, later we  also derive the probability of minor outbreak for an epidemic starting with $m$ latent individuals and $j$ infectious individuals with a very small infected \textit{fraction} ($m+j\ll n$). 

In short, the population model has two parameters, the birth rate $\lambda$ and the death rate $\mu$; and the disease model  has five parameters, the transmission rate $\kappa$, the end of latency rate  $\nu$, the recovery rate $\delta$, the disease death rate $\sigma$ and the immunity waning rate $\rho$. The possible events and their rates, when currently in state $Z(t)=(S(t),E(t),I(t),R(t))=(u,v,x,y)=z$, are given in Table \ref{notations}.

The SI, SIS, SIR, SIRS, SEI, SEIS, SEIR models are special cases of the SEIRS model defined above. If $\rho=0$, then an infected individual cannot go back to the susceptible state and we get an SEIR model. If $\rho\longrightarrow\infty$, then the recovered state vanishes and we get an SEIS model. If $\nu\longrightarrow\infty$, then the latent state vanishes giving an SIRS model. If $\delta=0$, then we have lifelong infectivity and hence an SEI model. If $\nu\longrightarrow\infty$ and $\rho\longrightarrow\infty$, then the latent state and the recovered state vanish and we have an SIS model. If $\nu\longrightarrow\infty$ and $\delta=\rho=0$,  then the latent state vanishes and we have lifelong infectivity, giving hence an SI model. Therefore from the results of the SEIRS model, one can deduce those of the others.
\begin{table}[h]
\begin{centering}
\begin{tabular}{|l|l|l|l|} 
\hline
Event& State change $l$&Rate\\ \hline
Birth& $(1,0,0,0)$&$\lambda N$\\ \hline
Death of susceptible& $(-1,0,0,0)$&$\mu S$\\ \hline
Death of exposed& $(0,-1,0,0)$&$\mu E$\\ \hline
Death of infective& $(0,0,-1,0)$&$(\mu+\sigma)I$\\ \hline
Death of recovered & $(0,0,0,-1)$&$\mu R$\\ \hline
Infection& $(-1,1,0,0)$&$\kappa SI/N$ \\ \hline
End of latency& $(0,-1,1,0)$&$\nu E$  \\ \hline
Recovery&$(0,0,-1,1)$&$\delta I$\\ \hline
Loss of immunity&$(1,0,0,-1)$&$\rho R$\\ \hline
\end{tabular}
\caption{The uniform Markovian dynamic SEIRS epidemic: type of events, their state change $l$ (The old state $z=(u,v,x,y)$ is hence changed to $z+l$) and their rates.}\label{notations}
\end{centering}
\end{table}

In this section we have presented the stochastic SEIRS model studied in this paper.
\section{Results for the initial phase}
\subsection{The dynamics of the population size N(t)}
Without the disease, the population process is a linear super-critical birth and death process with individual birth rate $\lambda$ and individual death rate $\mu$. But, with the introduction of the disease that  induces an extra death rate $\sigma$ for the infectious individuals, we have two possible events. Birth with rate $\lambda N(t)$ and death with rate $\mu N(t)+\sigma I(t)=[\mu+\sigma I(t)/N(t)]N(t)$, where $N(t)$ is the current total \textit{number} of individuals in the population and $I(t)$ the current \textit{number} of infectious individuals. Then, the population process is no longer a linear birth and death process (unless $\sigma = 0$). However, at the beginning of the epidemic when the \textit{fraction} of the infectious individuals is very small ($I(t)/N(t)\approx0$), the population will behave almost as a linear birth and death process with birth rate $\lambda$ and death rate $\mu$. Thus for the initial phase of the epidemic, we assume that the population size is a  linear super-critical birth and death process with birth rate $\lambda$ and death rate $\mu$. On the other hand, if the epidemic takes off after the initial phase, whenever the \textit{fraction} of the infectious individuals remains below $(\lambda-\mu)/\sigma$, the population process will be a super-critical process, having a positive probability to grow to $\infty$.  But, if the infectious \textit{fraction} grows beyond $(\lambda-\mu)/\sigma$, then the population process  becomes a sub-critical process.
\begin{remark}
If $(\lambda-\mu)/\sigma\geq1$ i.e. $\lambda\geq\mu+\sigma$, then N(t) is always a super-critical process although the epidemic.
\end{remark}
\subsection{Approximation of the initial phase of the epidemic}
Now, we consider the beginning of the epidemic, when the \textit{fraction} of the infected individuals is still small. $E(t)$ increases with rate $\kappa I(t)S(t)/N(t)$ due to infection and decreases by rate $(\nu+\mu)E(t)$ due to death or end of latency. $I(t)$ increases with rate $\nu E(t)$ due to end of latency, and decreases with rate $(\delta+\mu+\sigma)I(t)$ due to recovery or death. As we start with  $n-1$ susceptible and one infected, assuming that $n$ is large, at the beginning of the epidemic $S(t)/N(t)$ is very close to $1$. Then, the \textit{number} of exposed individuals increases almost with rate $\kappa I(t)$. Let $L_n(t)=E_n(t)+I_n(t)$ be the \textit{number} of infected (latent or infectious) individuals at time $t$. $L_n(t)$ can be approximated by a branching process, $L_\infty(t)=E_\infty(t)+I_\infty(t)$  with two stages: the childhood (or latent) stage $E_\infty$ and the adult (or infectious) stage $I_\infty$ \cite[p. 54]{Odo}. Initiated at $(E_\infty(0),I_\infty(0))=(1,0)$, where $E_\infty(t)$ increases with rate $\kappa I_\infty(t)$ and decreases with rate $(\mu+\nu)E_\infty(t)$, and $I_\infty(t)$ increases with rate $\nu E_\infty(t)$ (end of childhood) and decreases with rate $(\delta+\mu+\sigma)I_\infty(t)$ (end of adult stage). 
\begin{theorem}
Let $L_n(t)$ be the epidemic process and $L_\infty(t)$ be the branching process defined above. Then, $L_n(t)$ converges weakly to $L_\infty(t)$ ($L_n\stackrel{w}{\Longrightarrow}L_\infty$), as $n\longrightarrow\infty$,  on any finite interval $[0,t_1]$.
\end{theorem}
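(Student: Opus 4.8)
The plan is to prove the convergence by a direct coupling of the two processes on a common probability space, showing that they have identical sample paths on $[0,t_1]$ with probability tending to $1$ as $n\to\infty$; weak convergence on $D[0,t_1]$ then follows immediately. The construction uses the same collection of Poisson clocks to drive both processes: to each infectious individual I attach a contact clock of rate $\kappa$ and an end-of-infectiousness clock of rate $\delta+\mu+\sigma$, and to each latent individual an end-of-latency clock of rate $\nu$ together with a death clock of rate $\mu$. With these shared clocks the only place where $L_n$ and $L_\infty$ can disagree is the infection event: in the branching process every contact of an infectious individual produces a new latent individual (rate $\kappa I_\infty$), whereas in the epidemic a contact produces a new latent individual only when the uniformly chosen contactee is susceptible, which happens with probability $S_n/N_n$ (giving the reduced rate $\kappa I_n S_n/N_n$). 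I therefore couple the two by declaring each contact in the branching process to be \emph{accepted} by the epidemic with probability $S_n/N_n$ and \emph{wasted} otherwise, the wasted contacts being exactly those that hit a non-susceptible (exposed, infectious or recovered) individual.

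First I would define the discrepancy time $\tau_n$ as the first instant at which a contact is wasted. By construction, for every $t<\tau_n$ the individuals of the two processes are in one-to-one correspondence, so in particular $(E_n(t),I_n(t))=(E_\infty(t),I_\infty(t))$ and hence $L_n(t)=L_\infty(t)$. It therefore suffices to show that $P(\tau_n\le t_1)\to 0$.

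Next I would bound the probability of a wasted contact. Before $\tau_n$ the set of non-susceptible individuals, of size $E_n(t)+I_n(t)+R_n(t)$, is contained in the set of individuals that have ever been infected, whose size is at most the total progeny $T_\infty(t)$ produced by the branching process up to time $t$; since $L_\infty$ is a branching process it has only finitely many births on $[0,t_1]$, so that $T_\infty(t_1)<\infty$ almost surely. At the same time the population size $N_n(t)$ stays of order $n$ on $[0,t_1]$: it dominates the pure linear birth–death process obtained by removing the at most $T_\infty(t_1)$ disease deaths, and for the super-critical birth–death process $\inf_{0\le t\le t_1}N_n(t)/n$ is bounded away from $0$ in probability. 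Consequently each contact is wasted with conditional probability at most $T_\infty(t_1)/\inf_{s\le t_1}N_n(s)=O_P(1/n)$, while the number of contacts up to $t_1$ is $O_P(1)$; summing over contacts gives $P(\tau_n\le t_1)\to 0$, and hence $L_n\stackrel{w}{\Longrightarrow}L_\infty$ on $[0,t_1]$.

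The main obstacle is an apparent circularity: justifying that contacts are almost never wasted requires the number of infected to stay small, yet that smallness is precisely what the coupling is meant to establish. This is resolved by running the argument only up to the discrepancy time $\tau_n$, where the two processes are \emph{identical} by definition, so that the infected and recovered counts can be controlled by the finite branching-process quantities $L_\infty$ and $T_\infty$ rather than by the epidemic itself. The remaining difficulty is analytic rather than probabilistic: establishing a uniform-in-$t$ lower bound of order $n$ for $N_n(t)$ over the whole interval $[0,t_1]$, and not merely pointwise, which is what makes the per-contact wasting probability uniformly $O(1/n)$.
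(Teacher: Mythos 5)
Your argument is correct, but it takes a genuinely different route from the paper. The paper's proof is a one-sentence appeal to the standard result that, as $n\to\infty$, the transition rates of the epidemic (in particular the infection rate $\kappa S_n I_n/N_n$) converge to those of the two-stage branching process $L_\infty$, citing Diekmann, Heesterbeek and Britton; weak convergence is then inherited from the general theory for Markov jump processes with converging generators. You instead build an explicit Ball--Donnelly type coupling: shared exponential clocks for latency, recovery and death, with each contact of the branching process independently accepted by the epidemic with probability $S_n/N_n$, and the discrepancy time $\tau_n$ defined as the first wasted contact. Your resolution of the apparent circularity is the right one --- up to $\tau_n$ the two processes coincide, so the non-susceptible count is bounded by the a.s.\ finite total progeny $T_\infty(t_1)$ of the branching process rather than by the epidemic itself --- and combined with the uniform lower bound $\inf_{t\le t_1}N_n(t)/n$ bounded away from $0$ in probability, the per-contact wasting probability is $O_P(1/n)$ while the number of contacts is $O_P(1)$, giving $P(\tau_n\le t_1)\to 0$. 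What your approach buys is strictly more than the theorem asks: the paths of $L_n$ and $L_\infty$ agree exactly on $[0,t_1]$ with probability tending to $1$, which yields convergence in total variation of the path laws, not merely weak convergence, and it makes the error quantitative. The cost is the extra bookkeeping (the one-to-one correspondence of infection episodes despite loss of immunity, and the uniform-in-$t$ control of $N_n$), which you correctly identify as the only genuinely delicate analytic step; the paper's generator-convergence route avoids all of this at the price of leaning entirely on the cited machinery.
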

\begin{proof}
Like initial phase of other epidemics \cite[p. 54]{Odo}, when $n$ tends to infinity, the transition probabilities of the epidemic converge to that of the branching process.
\end{proof}
The results below are for the branching process $L_\infty$, but since $L_n\stackrel{w}{\Longrightarrow}L_\infty$, they apply to the epidemic as $n\longrightarrow\infty$.
\subsection{Thresholds}
In this subsection, we derive the Malthusian parameter $\alpha$ and the basic reproduction number $R_0$ of the limiting branching process $L_\infty$.

The Malthusian parameter $\alpha$ is defined as the exponential growth/decay rate the epidemic branching process has. It is the solution of
\begin{equation}
\int_0 ^\infty e ^{-\alpha t}c(t)dt=1,\label{c}
\end{equation}
where $c(t)$ is the expected rate at which an individual gives birth (has infectious contacts) $t$ time units after it was infected~\cite[page 10]{Jagers}.
\begin{theorem}
The Malthusian parameter of the epidemic is given by
\begin{equation}
\alpha=\displaystyle -\left(\mu+\frac{\nu+\delta+\sigma}{2}\right)+\sqrt{\frac{(\nu-\delta-\sigma)^2}{4}+\kappa \nu}.\label{alpha}
\end{equation}
\end{theorem}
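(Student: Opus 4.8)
The plan is to compute the reproduction kernel $c(t)$ explicitly and substitute it into the defining equation~\eqref{c}. By definition $c(t)$ is the expected rate at which a typical infected individual has infectious contacts $t$ time units after being infected. Since contacts occur only during the infectious stage and at rate $\kappa$, we have $c(t)=\kappa\,p_I(t)$, where $p_I(t)$ is the probability that the individual, infected at time $0$, is infectious at time $t$. So the first step is to determine $p_I(t)$ from the single-individual dynamics of the two-stage branching process.

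First I would track the individual as a Markov chain on the transient states $\{E,I\}$ with absorption into death/recovery. Writing $p_E(t)$ and $p_I(t)$ for the occupation probabilities, the latent state is left at total rate $\mu+\nu$ (natural death plus end of latency) and the infectious state at total rate $\mu+\delta+\sigma$ (natural death, recovery and disease death), while the flux $E\to I$ occurs at rate $\nu$. With the initial condition $(p_E(0),p_I(0))=(1,0)$ this gives $p_E(t)=e^{-(\mu+\nu)t}$ and, solving the resulting linear first-order ODE for $p_I$,
\begin{equation}
p_I(t)=\frac{\nu}{(\mu+\delta+\sigma)-(\mu+\nu)}\left(e^{-(\mu+\nu)t}-e^{-(\mu+\delta+\sigma)t}\right).\nonumber
\end{equation}

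Next I would insert $c(t)=\kappa\,p_I(t)$ into~\eqref{c} and evaluate the two elementary exponential integrals. Writing $a=\mu+\nu$ and $b=\mu+\delta+\sigma$, the factor $1/(b-a)$ cancels against the numerator $b-a$ produced by the two integrals, and the equation collapses to the clean quadratic condition $(\alpha+a)(\alpha+b)=\kappa\nu$. Substituting $\beta=\alpha+\mu$ turns this into $\beta^2+(\nu+\delta+\sigma)\beta+\nu(\delta+\sigma)-\kappa\nu=0$; a short computation shows the discriminant equals $(\nu-\delta-\sigma)^2+4\kappa\nu$, and back-substituting $\alpha=\beta-\mu$ yields exactly~\eqref{alpha} up to a $\pm$ in front of the square root.

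The only genuinely delicate point is selecting the correct root. I would argue that the left-hand side of~\eqref{c} is strictly decreasing in $\alpha$ on the range where the integral converges and ranges over all positive values, so the Malthusian parameter is the unique real solution; requiring $\alpha$ to satisfy the integral equation forces the $+$ root, which gives the stated sign. (The degenerate case $a=b$, i.e. $\nu=\delta+\sigma$, must be handled by the limiting form of $p_I$, but it leads to the same quadratic $(\alpha+a)^2=\kappa\nu$ and hence the same final expression, so it needs no separate treatment.)
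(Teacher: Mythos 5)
Your proposal is correct and follows essentially the same route as the paper: both compute $c(t)=\kappa p_I(t)$ (the paper via a convolution conditioning on the end of latency, you via the Kolmogorov ODEs for the occupation probabilities, which give the identical expression), insert it into Equation~\eqref{c}, and reduce to the quadratic $(\alpha+\mu+\nu)(\alpha+\mu+\delta+\sigma)=\kappa\nu$. Your explicit justification of the choice of the $+$ root via monotonicity of the Laplace transform on its domain of convergence is a small addition the paper leaves implicit, but it does not change the substance of the argument.
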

\begin{proof}
 In the SEIRS model, the contact rate is $0$ during the latent period and $\kappa$ during the infectious period. By conditioning on when the latent period ends, it follows that 
\[c(t)=\kappa e^{-\mu t}\int_0 ^t \nu e^{-\nu s}e^{-(\delta+\sigma)(t-s)}ds =\kappa \nu e^{-(\mu+\delta+\sigma)t}\int_0 ^t e^{-(\nu-(\delta +\sigma))s}ds.\] 
Thus, 
\begin{align*}\displaystyle c(t)= \left\{ \begin{array}{ll}\displaystyle \frac{\kappa \nu}{\nu-(\delta+\sigma)}\left(e^{-(\mu+\delta+\sigma)t}-e^{-(\nu+\mu)t}\right) \textrm{, if } \nu\neq \delta+\sigma,\\\\
 \displaystyle\kappa \nu te^{-(\mu+\delta+\sigma)t}\textrm{, if } \nu=\delta+\sigma.
 \end{array} 
 \right.
\end{align*}
Inserting this into Equation (\ref{c}), one gets
\begin{align*}\displaystyle \alpha= \left\{ \begin{array}{ll}\displaystyle -\left(\mu+\frac{\nu+\delta+\sigma}{2}\right)+\sqrt{\frac{(\nu-\delta-\sigma)^2}{4}+\kappa \nu} \textrm{, if } \nu \neq \delta+\sigma,\\\\
 \displaystyle\sqrt{\kappa\nu}-(\mu+\delta+\sigma) \textrm{, if } \nu=\delta+\sigma.
 \end{array} 
 \right.
\end{align*}
Then, 
\[\alpha=\displaystyle -\left(\mu+\frac{\nu+\delta+\sigma}{2}\right)+\sqrt{\frac{(\nu-\delta-\sigma)^2}{4}+\kappa \nu}.\]
\end{proof}
The basic reproduction number $R_0$, is the expected \textit{number} of secondary cases per primary case in a virgin population~\cite[page 4]{Odo}.
\begin{theorem}  The basic reproduction number of the epidemic is 
\begin{equation}
R_0=\frac{\nu\kappa}{(\mu+\nu)(\delta+\mu+\sigma)}.\label{R0}
\end{equation}
\end{theorem}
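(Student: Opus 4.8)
The plan is to exploit the reproduction kernel $c(t)$ already computed in the proof of the Malthusian parameter. By definition $c(t)$ is the expected rate of infectious contacts $t$ time units after infection, so the expected total number of infectious contacts made by a single infected individual over its entire life is the integral
\[
R_0=\int_0^\infty c(t)\,dt.
\]
In a virgin (fully susceptible) population each such contact is with a susceptible and therefore produces exactly one secondary case, so this integral equals the mean offspring number of the limiting branching process, i.e. $R_0$. Thus the first step reduces the theorem to evaluating $\int_0^\infty c(t)\,dt$ using the explicit two-case expression for $c(t)$ derived earlier.

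First I would treat the generic case $\nu\neq\delta+\sigma$, where $c(t)=\frac{\kappa\nu}{\nu-(\delta+\sigma)}\bigl(e^{-(\mu+\delta+\sigma)t}-e^{-(\nu+\mu)t}\bigr)$. Integrating term by term gives
\[
\int_0^\infty c(t)\,dt=\frac{\kappa\nu}{\nu-(\delta+\sigma)}\left(\frac{1}{\mu+\delta+\sigma}-\frac{1}{\nu+\mu}\right),
\]
and combining the two fractions over the common denominator $(\mu+\delta+\sigma)(\nu+\mu)$ produces a factor $\nu-(\delta+\sigma)$ in the numerator that cancels the prefactor, leaving exactly $\frac{\nu\kappa}{(\mu+\nu)(\delta+\mu+\sigma)}$. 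The degenerate case $\nu=\delta+\sigma$, where $c(t)=\kappa\nu t\,e^{-(\mu+\delta+\sigma)t}$, I would handle separately: using $\int_0^\infty t\,e^{-at}\,dt=1/a^2$ with $a=\mu+\delta+\sigma=\mu+\nu$ yields the same expression, confirming continuity across the two regimes.

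As an independent check, and as the conceptually cleaner derivation, I would also present the direct probabilistic argument: a newly infected individual survives the latent period and reaches the infectious stage with probability $\nu/(\mu+\nu)$ (a competing-risks race between end-of-latency at rate $\nu$ and death at rate $\mu$); conditional on becoming infectious, it makes contacts at rate $\kappa$ throughout an infectious period terminating at rate $\delta+\mu+\sigma$, yielding an expected $\kappa/(\delta+\mu+\sigma)$ contacts. Multiplying these gives the stated $R_0$. I do not anticipate a serious obstacle, since both routes are elementary; the only point requiring care is the justification that in the virgin-population limit each infectious contact produces one new infected individual, which follows from $S(t)/N(t)\to1$ as $n\to\infty$ (already invoked in the branching approximation), so that the mean number of contacts and the mean number of secondary cases coincide.
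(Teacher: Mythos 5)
Your proposal is correct, but your primary route differs from the paper's. The paper proves the theorem by writing down the full offspring distribution: it computes $P(Y=0)$ and $P(Y=k)$ for the number $Y$ of infectious contacts (a competing-risks calculation giving a zero-modified geometric distribution with parameter $p=(\delta+\mu+\sigma)/(\kappa+\delta+\mu+\sigma)$) and then takes the expectation $E(Y)=\frac{\nu}{\mu+\nu}\cdot\frac{\kappa}{\mu+\delta+\sigma}$. Your lead argument --- evaluating $\int_0^\infty c(t)\,dt$ from the kernel already computed for the Malthusian parameter, in both the generic and the degenerate case $\nu=\delta+\sigma$ --- is exactly what the paper relegates to a one-line remark after its proof (``An alternative way to derive $R_0$ is by the relation $R_0=\int_0^\infty c(t)\,dt$''), and your ``independent check'' is essentially the paper's actual proof compressed to its two-factor interpretation. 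Both computations are correct; I verified that the cancellation of $\nu-(\delta+\sigma)$ and the degenerate-case integral $\int_0^\infty t e^{-at}\,dt=1/a^2$ both yield the stated formula. The trade-off is worth noting: your integral route is more economical since $c(t)$ is already in hand, whereas the paper's distributional route produces the full law of $Y$, which it then reuses directly to form the probability generating function $g(z)$ in the subsequent derivation of the minor-outbreak probability $\pi$ --- so the extra work is not wasted in the paper's overall architecture. Your closing care about $S(t)/N(t)\to 1$ justifying that contacts and secondary cases coincide in the virgin-population limit is appropriate and consistent with the paper's branching approximation.
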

\begin{proof}
Let $Y$ be the \textit{number} of infectious contacts that  an individual has during  the infectious period. Then,
\[P(Y=0)=\frac{\mu}{\mu+\nu}+\frac{\nu}{\nu+\mu}\times\frac{\delta+\mu+\sigma}{\kappa+\delta+\mu+\sigma}.\]
The first term is the probability that the individual dies during the latent period, and the second term the probability that the individual does not die during the latent period but leaves the infectious compartment by death or recovery without an infectious contact. And for all positive integer $k$,
\[ P(Y=k)=\frac{\nu}{\mu+\nu}\times\left(\frac{\kappa}{\kappa+\delta+\mu+\sigma}\right)^k\times\frac{\delta+\mu+\sigma}{\kappa+\delta+\mu+\sigma}.\]
Therefore, $Z$ has a \emph{zero modified} geometric distribution~\cite[page 16]{Haccou}, with parameter $p=(\delta+\mu+\sigma)/(\kappa+\delta+\mu+\sigma)$ . Then the expected value of $Y$ is
\[E(Y)=\frac{\nu}{\mu+\nu}\times\frac{1-p}{p}=\frac{\nu}{\mu+\nu}\times\frac{\kappa}{\mu+\delta+\sigma}.\]
Thus, \[R_0=\frac{\nu}{\mu+\nu}\times\frac{\kappa}{\mu+\delta+\sigma}.\]
\end{proof}
\begin{remark}
The first factor is the probability that the individual does not die during the latent period, and the second is  the expected \textit{number} of infectious contacts while being infectious. An alternative way to derive $R_0$ is by the relation $R_0=\int_0 ^\infty c(t)dt$ \cite[page 69]{Haccou}. This formula gives the same result as above.
\end{remark}
As mentioned above, the SI, SIS, SIR, SIRS, SEI, SEIS, SEIR models are all sub-models of the SEIRS model.  Then, from the results obtained for the SEIRS model, we deduce that of the others. In Table \ref{thresholds}, we have the values of the basic reproduction number $R_0$ and the Malthusian parameter $\alpha$  of the models listed above. The fourth column gives the changes to get the corresponding model. Further by setting $\sigma=0$, one gets the corresponding results for a disease without an additional death rate for the infectious.
\begin{table}[h]
\begin{tabular}{|c|c|c|c|}
\hline
Model& $R_0 $&$\alpha$& Parameters change\\ \hline
$SEIRS$& $\frac{\kappa \nu}{(\mu+\nu)(\mu+\delta+\sigma)}$&$-\left(\mu+\frac{\nu+\delta+\sigma}{2}\right)+\sqrt{\frac{(\nu-\delta-\sigma)^2}{4}+\kappa \nu}$&\\ \hline
$SEIR$& $\frac{\kappa \nu}{(\mu+\nu)(\mu+\delta+\sigma)}$&$-\left(\mu+\frac{\nu+\delta+\sigma}{2}\right)+\sqrt{\frac{(\nu-\delta-\sigma)^2}{4}+\kappa \nu}$&$\rho=0$\\ \hline
$SEIS$& $\frac{\kappa \nu}{(\mu+\nu)(\mu+\delta+\sigma)}$&$-\left(\mu+\frac{\nu+\delta+\sigma}{2}\right)+\sqrt{\frac{(\nu-\delta-\sigma)^2}{4}+\kappa \nu}$&$\rho\rightarrow \infty$  \\ \hline
$SEI$& $\frac{\kappa \nu}{(\mu+\nu)(\mu+\sigma)}$&$-\left(\mu+\frac{\nu+\sigma}{2}\right)+\sqrt{\frac{(\nu-\sigma)^2}{4}+\kappa \nu}$&$\delta=0$, $\rho=0$\\ \hline
$SIRS$& $\frac{\kappa }{\mu+\delta+\sigma}$&$\kappa-(\mu+\delta+\sigma)$&$\nu \rightarrow\infty$\\ \hline
$SIR$ & $\frac{\kappa }{\mu+\delta+\sigma}$&$\kappa-(\mu+\delta+\sigma)$&$\nu \rightarrow\infty$, $\rho=0$\\ \hline
$SIS$& $\frac{\kappa }{\mu+\delta+\sigma}$&$\kappa-(\mu+\delta+\sigma)$&$\nu\rightarrow \infty$, $\rho\rightarrow \infty$  \\ \hline
$SI$& $\frac{\kappa }{\mu+\sigma}$&$\kappa-(\mu+\sigma)$&$\nu \rightarrow\infty$, $\delta=0$, $\rho=0$\\ \hline
\end{tabular}
 \caption{Thresholds for sub-models of the SEIRS model}\label{thresholds}  
\end{table}
\begin{remark}
The basic reproduction number $R_0$ and the Malthusian parameter $\alpha$ are identical for the SEIR, SEIS and SEIRS models.
\end{remark}

From  Equations (\ref{alpha}) and (\ref{R0}), we have
\begin{align*}
\alpha>0&\Longleftrightarrow-\left(\mu+\frac{\nu+\delta+\sigma}{2}\right)+\sqrt{\frac{(\nu-\delta-\sigma)^2}{4}+\kappa \nu}>0 \\
 &\Longleftrightarrow \frac{(\nu-\delta-\sigma)^2}{4}+\kappa \nu>{\left(\mu+\frac{\nu+\delta+\sigma}{2}\right)}^2\\
 &\Longleftrightarrow 4\kappa \nu>(2\mu+\nu+\delta+\sigma)^2-(\nu-\delta-\sigma)^2\\
 &\Longleftrightarrow 4\kappa \nu>(2\mu+2\nu)(2\mu+2\delta+2\sigma)\\
 &\Longleftrightarrow \frac{\kappa \nu}{(\mu+\nu)(\mu+\delta+\sigma)}>1\\
 &\Longleftrightarrow R_0>1.
 \end{align*}
 It follows that the basic reproduction number $R_0$ exceeds $1$ if and only if, the Malthusian parameter $\alpha$ exceeds $0$. That is, the sign relation $sign(\alpha) = sign(R_0-1)$ is verified.
\begin{remark} It is well known that to surely prevent the disease to invade the population, $R_0$ must be less than $1$. To control the epidemic one need then to diminish $R_0$.  The basic reproduction number $R_0$ can be written in the following form.
\[R_0=\kappa\times\frac{\nu}{\mu+\nu}\times\frac{1}{\mu+\delta+\sigma}.\]
So, it is clear that $R_0$ increases with $\kappa$ and $\nu$, and decreases with $\mu, \delta$ and $\sigma$. The contact rate  $\kappa$ can be reduced by hospitalization or quarantine of the infectious individuals. The recovery rate $\delta$ can be increased  by medication. In the case of an epizootic, $\sigma$ the disease related death rate, can be increased by culling infectious animals.
\end{remark}

We have derived two thresholds (the Malthusian parameter $\alpha$ and the basic reproduction number $R_0$) of the SEIRS epidemic branching process, deduced the corresponding thresholds for the sub-models of the SEIRS epidemic model and established that $sign(\alpha)=sign(R_0-1)$. 

\subsection{Main result for the initial phase}
In this subsection, we derive the probability for a minor outbreak of the epidemic branching process, and state the main result for the initial phase of the epidemic.

Let $\pi=P(\lim_t L_\infty(t)=0)$ be the probability of a minor outbreak of $L_\infty$ and $Y$ be the \textit{number} of infectious contacts that  an individual has during  the infectious period. As we start with one latent individual, $\pi$ is the smallest positive solution of the equation $z=g(z)$, where $g$ is the probability generating function (pgf) of $Y$ \cite[page 113]{Haccou}. We have 
\begin{align*}
g(z)&=\sum_{k=0} ^\infty P(Y=k)z^k\\
&=\frac{\mu}{\mu+\nu}+\frac{\nu}{\mu+\nu}\frac{\delta+\mu+\sigma}{\kappa+\delta+\mu+\sigma}+\sum_{k=1} ^\infty \frac{\nu}{\mu+\nu}\frac{\delta+\mu+\sigma}{\kappa+\delta+\mu+\sigma}{\left(\frac{\kappa}{\kappa+\delta+\mu+\sigma}\right)}^kz^k\\
&=a+\frac{(1-a)b}{1-(1-b)z},\textrm{ with } a=\frac{\mu}{\mu+\nu} \textrm{ and } b=\frac{\delta+\mu+\sigma}{\kappa+\delta+\mu+\sigma}.
\end{align*}
Then, $\pi$ is the smallest solution in $[0,1]$ of the following equation.
\begin{equation}
z=a+\frac{(1-a)b}{1-(1-b)z}.\label{extinction}
\end{equation}
Equation (\ref{extinction}) has two solutions, 
\[z_0=1 \textrm{ and } z_1=a+\frac{b}{1-b}=\frac{\mu}{\mu+\nu}+\frac{\delta+\mu+\sigma}{\kappa}=\frac{\mu}{\mu+\nu}+\frac{\nu}{\nu+\mu}\frac{1}{R_0}.\]
Then, we have the following theorem.
\begin{theorem}
Let $\pi$ be the probability of a minor outbreak of the epidemic  when started with one latent individual. Then,
\begin{eqnarray}\displaystyle \pi = \left\{ \begin{array}{ll}\displaystyle 1 & \textrm{ if } R_0 \leq 1,\\\\
 \displaystyle\frac{\mu}{\mu+\nu}+\frac{\nu}{\mu+\nu}\frac{1}{R_0}& \textrm{ if } R_0>1.
 \end{array} 
 \right.\label{pi}\end{eqnarray}
\end{theorem}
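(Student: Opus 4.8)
The plan is to identify $\pi$ with the extinction probability of a Galton--Watson process and then read off the answer from the two roots of Equation (\ref{extinction}) already exhibited above. The starting observation is that reproduction in the two-stage branching process $L_\infty$ occurs solely through infectious contacts: an individual, during its (a.s.\ finite) latent and infectious periods, produces a number $Y$ of offspring, and distinct individuals reproduce according to independent copies of $Y$. Consequently the generation-by-generation process embedded in $L_\infty$ is an ordinary Galton--Watson process with offspring probability generating function $g$, and $L_\infty$ dies out if and only if this embedded process does. By the classical extinction theorem for Galton--Watson processes, that probability is the smallest root in $[0,1]$ of the fixed-point equation $z=g(z)$, which is exactly Equation (\ref{extinction}).

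First I would recall the two roots computed above, $z_0=1$ and $z_1=\frac{\mu}{\mu+\nu}+\frac{\nu}{\mu+\nu}\frac{1}{R_0}$, which are the only candidates for $\pi$. Next I would decide which of them is the smaller root lying in $[0,1]$ by comparing $z_1$ with $1$: since $z_1<1$ is equivalent to $\frac{\nu}{\mu+\nu}\frac{1}{R_0}<\frac{\nu}{\mu+\nu}$, i.e.\ to $R_0>1$, we get $z_1<1$ when $R_0>1$, $z_1=1$ when $R_0=1$, and $z_1>1$ when $R_0<1$. Because $z_1\ge 0$ always, it follows that for $R_0>1$ the smallest root in $[0,1]$ is $z_1$, whereas for $R_0\le 1$ we have $z_1\ge 1=z_0$ and the smallest root in $[0,1]$ is $z_0=1$. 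This yields the two cases of the stated formula.

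A cleaner route to the same conclusion, which I would present in parallel as a sanity check, is to invoke the standard dichotomy directly in terms of the offspring mean: a Galton--Watson process is a.s.\ extinct precisely when $m=E(Y)\le 1$. Since the mean number of offspring was computed above to be $E(Y)=R_0$, extinction is certain ($\pi=1$) exactly when $R_0\le 1$, and when $R_0>1$ the extinction probability is the unique root of $z=g(z)$ in $[0,1)$, which must be $z_1$. This is consistent with the convexity picture, as $g$ is increasing and convex on $[0,1]$ with $g(1)=1$, so its graph meets the diagonal a second time below $1$ exactly when the slope $g'(1)=m$ exceeds $1$.

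The only genuine point requiring care --- the algebra being elementary once the roots are known --- is the reduction in the first step, namely that extinction of the continuous-time, age-structured process $L_\infty$ coincides with extinction of the embedded discrete generation process. I expect to justify this by noting that, under positive removal rates, each individual's latent and infectious periods are a.s.\ finite and it makes an a.s.\ finite number of infectious contacts, so the line of descent is infinite if and only if the associated offspring tree is infinite; since that tree is generated by independent copies of $Y$ with generating function $g$, the classical smallest-root characterization applies and delivers Equation (\ref{pi}).
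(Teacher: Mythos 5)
Your proposal is correct and follows essentially the same route as the paper: the paper likewise identifies $\pi$ as the smallest root in $[0,1]$ of $z=g(z)$ for the pgf $g$ of the contact count $Y$, exhibits the two roots $z_0=1$ and $z_1=\frac{\mu}{\mu+\nu}+\frac{\nu}{\mu+\nu}\frac{1}{R_0}$, and reads off the dichotomy. Your explicit verification that $z_1<1$ if and only if $R_0>1$, and the cross-check via $E(Y)=R_0$, merely spell out steps the paper leaves implicit.
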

\begin{Corollary}
Let $\pi_{(m,k)}$ be the probability of a minor outbreak when the epidemic started with $m$ latent individuals and $k$ infectious individuals. Then, 
\begin{eqnarray}\displaystyle \pi_{(m,k)} = \left\{ \begin{array}{ll}\displaystyle 1 & \textrm{ if } R_0 \leq 1,\\\\
 \displaystyle\left(\frac{\mu}{\mu+\nu}+\frac{\nu}{\mu+\nu}\frac{1}{R_0}\right)^m\left(\frac{1}{R_0}\right)^k& \textrm{ if } R_0>1.
 \end{array} 
 \right.\label{pimk}\end{eqnarray}
\end{Corollary}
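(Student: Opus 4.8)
The plan is to exploit the independence (branching) structure of the limiting two-type process $L_\infty$. Distinct infected individuals generate conditionally independent lineages, so a minor outbreak starting from $m$ latent and $k$ infectious individuals occurs precisely when all $m+k$ of these lineages die out. Writing $q_E$ for the extinction probability of the lineage started by a single latent individual and $q_I$ for that of a single infectious individual, independence gives the product rule
\[
\pi_{(m,k)} = q_E^{\,m}\, q_I^{\,k}.
\]
The preceding theorem (equation (\ref{pi})) already supplies $q_E = \pi$, so only $q_I$ remains to be identified.

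To obtain $q_I$, I would condition on the first transition of a single latent individual: it either dies, with probability $\mu/(\mu+\nu)$, terminating its lineage, or matures into an infectious individual, with probability $\nu/(\mu+\nu)$, after which its lineage goes extinct with probability $q_I$. This yields the one-step relation
\[
q_E = \frac{\mu}{\mu+\nu} + \frac{\nu}{\mu+\nu}\, q_I,
\qquad\text{equivalently}\qquad
q_I = \frac{(\mu+\nu)\,q_E - \mu}{\nu}.
\]
The same relation can be read directly from the factorisation $g(z) = \tfrac{\mu}{\mu+\nu} + \tfrac{\nu}{\mu+\nu}\,h(z)$, where $h(z) = b/\!\left(1-(1-b)z\right)$ is the pgf of the (geometric) contact count of an already-infectious individual; indeed $q_I = h(q_E)$.

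It then remains to evaluate the two regimes. When $R_0 \le 1$ we have $q_E = \pi = 1$, and the one-step relation forces $q_I = 1$, so $\pi_{(m,k)} = 1$. When $R_0 > 1$, substituting $q_E = \pi = \frac{\mu}{\mu+\nu} + \frac{\nu}{\mu+\nu}\frac{1}{R_0}$ into the formula for $q_I$ cancels the constant terms and leaves $q_I = 1/R_0$; inserting both factors into the product rule produces exactly
\[
\pi_{(m,k)} = \left(\frac{\mu}{\mu+\nu}+\frac{\nu}{\mu+\nu}\frac{1}{R_0}\right)^{m}\left(\frac{1}{R_0}\right)^{k}.
\]

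The algebra is routine, so the only delicate point — the main (mild) obstacle — is the rigorous justification of the product rule: one must confirm that the $m+k$ initial lineages evolve independently in $L_\infty$ despite the two types interacting through maturation. This is the defining branching property of the limiting Markov process and follows from the independence of the individual lifetime-and-contact mechanisms posited in the model, but it deserves an explicit word precisely because latent and infectious individuals are coupled within a single lineage.
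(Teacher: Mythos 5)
Your proposal is correct and follows essentially the same route as the paper: both identify $\pi_{(0,1)}$ (your $q_I$) via the one-step conditioning relation $\pi = \frac{\mu}{\mu+\nu} + \frac{\nu}{\mu+\nu}\pi_{(0,1)}$, obtain $\pi_{(0,1)} = 1/R_0$ when $R_0>1$, and conclude by the product rule $\pi_{(m,k)} = \pi_{(1,0)}^m\,\pi_{(0,1)}^k$ from the independence of the $m+k$ initial lineages. Your added caveat about justifying that independence is a fair point, but the paper handles it the same way you do, by appeal to the standard branching property.
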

\begin{proof}
Let $\pi_{(m,k)}$ be the probability of a minor outbreak when the epidemic starts with $m$ latent and $k$  infectious individuals. With this notation, we  have\\
$\pi_{(1,0)}=P(\textrm{minor outbreak}|E(0)=1,I(0)=0)=\pi$, and\\
$\pi_{(0,1)}=P(\textrm{minor outbreak}|E(0)=0,I(0)=1)$. Thus, 
$\pi=\mu/(\mu+\nu)+(\nu/(\mu+\nu))\pi_{(0,1)}$.\\
Therefore, $\pi_{(0,1)}=((\mu+\nu)/\nu)\pi-\mu/\nu$.
Thus, by Equation (\ref{pi}) one gets
\begin{eqnarray}\displaystyle \pi_{(0,1)} = \left\{ \begin{array}{ll}\displaystyle 1 & \textrm{ if } R_0 \leq 1,\\\\
 \displaystyle\frac{1}{R_0}& \textrm{ if } R_0>1.
 \end{array} 
 \right.\label{pi01}
\end{eqnarray}
We have $\pi_{(m,k)}=\left(\pi_{(1,0)}\right)^m\left(\pi_{(0,1)}\right)^k$, since all the $m+k$ independent epidemics must die out \cite[page 112]{Haccou}. Thus, by Equations (\ref{pi}) and (\ref{pi01}), we get Equation (\ref{pimk}).
\end{proof}
\begin{remark}
This result is the same as that of the SEIR stochastic model studied by Allen and Lahodny in \cite{Allen}.
\end{remark}
As noted above, due to the additive death rate in the infectious compartment, if the epidemic takes off, the population process can be turned to a sub-critical process. In this case, the population may go extinct. As we start the process with $n$ individuals, assuming that $n$ is large,  we define the probability of minor outbreak of the epidemic, as the probability that the \textit{number} of infected individuals does not exceed $\sqrt{n}$, that is $P(L_n(t)<\sqrt{n}, \forall t\geq0)$ \cite[page 55]{Odo}. Let us state now the main result for the initial phase. 
\begin{theorem}\label{initial phase}
Consider the uniform SEIRS epidemic model defined above, with \\$(S_n(0),E_n(0), I_n(0),R_n(0))=(n-1,1,0,0), L_n(t)=E_n(t)+I_n(t), N_n(t)=S_n(t)+E_n(t)+I_n(t)+R_n(t)$, and let $L_\infty(t)$ denote the birth and death process defined above, $\alpha$ its  Malthusian parameter, $\pi$ the probability of a minor outbreak of $L_\infty$, and $\pi_n:=P(L_n(t)<\sqrt{n}, \forall t\geq0)$ denote the probability of a minor outbreak of the epidemic. Then as $n\rightarrow \infty$, we have the following results:
\begin{enumerate}
\item If $\alpha\leq0$, then for any $n, L_n(t)\rightarrow0$ as $t\rightarrow\infty$ with probability $1$.
\item If $0<\alpha<\lambda-\mu$, then  $\pi_n\rightarrow\pi=\mu/(\mu+\nu)+\nu/[(\mu+\nu)R_0]$. With the remaining probability $(1-\pi_n)\rightarrow1-\pi$, $L_n$ grows exponentially: $L_n(t)\sim e^{\alpha t}$, but $L_n(t)/N_n(t)\rightarrow0$ as $t\rightarrow\infty$.
\item If $\alpha>\lambda-\mu$ , then $\pi_n\rightarrow\pi=\mu/(\mu+\nu)+\nu/[(\mu+\nu)R_0]$. With the remaining probability $(1-\pi_n)\rightarrow1-\pi$, during the initial phase of the epidemic, $L_n$ grows exponentially with rate $\alpha$.
\end{enumerate}
\end{theorem}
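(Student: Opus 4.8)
The plan is to combine the weak convergence $L_n \stackrel{w}{\Longrightarrow} L_\infty$ with two refinements: a pathwise stochastic domination that controls $L_n$ for \emph{all} $t$ when the branching process is (sub)critical, and a coupling that stays valid on a time window growing with $n$ when it is supercritical. Throughout I use the equivalence $\mathrm{sign}(\alpha)=\mathrm{sign}(R_0-1)$ established above, so that $\alpha\le 0$ corresponds to the (sub)critical case $R_0\le 1$ and $\alpha>0$ to the supercritical case $R_0>1$.

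For part (i) the key observation is that the only discrepancy between the dynamics of $L_n=E_n+I_n$ and of $L_\infty$ is the infection rate: in the epidemic a new latent appears at rate $\kappa S_n I_n/N_n\le \kappa I_n$, whereas in $L_\infty$ it appears at rate $\kappa I_\infty$, all other per-capita rates ($\nu,\delta,\mu,\sigma$) being identical. Since $S_n/N_n\le 1$, I would build a monotone coupling (thinning the infection events of $L_\infty$, keeping $E_n\le E_\infty$ and $I_n\le I_\infty$) under which $L_n(t)\le L_\infty(t)$ for all $t$ and every $n$. When $\alpha\le 0$, i.e. $R_0\le 1$, the process $L_\infty$ is critical or subcritical and dies out almost surely; by domination $L_n(t)\le L_\infty(t)\to 0$, and since $L_n$ is integer-valued it is absorbed at $0$. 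This proves (i) for every fixed $n$, with no recourse to $n\to\infty$.

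For the minor-outbreak probability in parts (ii)--(iii) I would couple $L_n$ to $L_\infty$ up to the stopping time $\tau_n=\inf\{t:L_n(t)\ge \sqrt n\}$. While $t<\tau_n$ we have $L_n(t)/N_n(t)=O(1/\sqrt n)$, so $S_n/N_n=1-o(1)$ and the infection rate $\kappa S_n I_n/N_n$ agrees with $\kappa I_\infty$ up to a vanishing error; together with the weak convergence this lets the two processes be coupled to coincide until $\tau_n$, with the coupling probability tending to $1$ \cite[p.\ 55]{Odo}. The event $\{L_n(t)<\sqrt n\ \forall t\}$ then matches the extinction event of $L_\infty$: if $L_\infty$ dies out (probability $\pi$) it stays bounded and never reaches $\sqrt n$ for $n$ large, a minor outbreak; if $L_\infty\to\infty$ (probability $1-\pi$) it exceeds $\sqrt n$ in finite time, a major outbreak. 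Hence $\pi_n\to\pi=\mu/(\mu+\nu)+\nu/[(\mu+\nu)R_0]$. On the survival event the supercritical branching theory gives $e^{-\alpha t}L_\infty(t)\to W$ almost surely for a strictly positive $W$ \cite{Jagers,Haccou}, so $L_n(t)\sim e^{\alpha t}$ throughout the initial phase.

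It remains to separate (ii) from (iii) by comparing the growth of $L_n$ with that of $N_n$, which on its survival set behaves like $e^{(\lambda-\mu)t}$ because the disease deaths (rate $\sigma I_n$) are negligible relative to $\lambda N_n$ while the infected fraction is small. If $0<\alpha<\lambda-\mu$, then $L_n(t)/N_n(t)\sim e^{(\alpha-\lambda+\mu)t}\to 0$: the infected fraction stays negligible, the approximation $S_n/N_n\approx 1$ is self-consistent for all time, the branching regime never terminates, and $L_n(t)\sim e^{\alpha t}$ with $L_n/N_n\to 0$, giving (ii). If $\alpha>\lambda-\mu$, the infected number outpaces the population, the fraction $L_n/N_n$ ceases to be negligible, the factor $S_n/N_n$ drops below $1$ and the linear approximation breaks down; consequently one can only assert $L_n(t)\sim e^{\alpha t}$ \emph{during the initial phase}, which is (iii). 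The main obstacle I anticipate is precisely the uniform control of the coupling error over the growing window $[0,\tau_n]$ — justifying the interchange of the limits $n\to\infty$ and $t\to\infty$ and showing that the $o(1/\sqrt n)$ discrepancy in the infection rate does not accumulate before $L_n$ reaches $\sqrt n$; the remaining growth-rate comparisons are then routine.
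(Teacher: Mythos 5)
Your proposal follows the same overall strategy as the paper --- approximate $L_n$ by the two-stage branching process $L_\infty$, invoke extinction of (sub)critical branching processes for (i), the extinction probability $\pi$ and the Malthusian growth $L_\infty(t)\sim e^{\alpha t}$ on survival for (ii)--(iii), and compare $e^{\alpha t}$ with $N_n(t)\sim e^{(\lambda-\mu)t}$ to separate the two supercritical cases. Where you genuinely depart from the paper is in part (i): the paper's proof only argues that $L_\infty$ dies out almost surely and implicitly transfers this to $L_n$ via the weak convergence, but that convergence holds only on finite time intervals, whereas the claim is ``for any $n$, $L_n(t)\to 0$ as $t\to\infty$'' --- a statement about each fixed $n$ over an infinite horizon. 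Your monotone coupling, thinning the infection events of $L_\infty$ using $\kappa S_nI_n/N_n\le\kappa I_n\le\kappa I_\infty$ so that $(E_n,I_n)\le(E_\infty,I_\infty)$ pathwise for all $t$, closes exactly this gap and yields (i) for every $n$ without any limit in $n$; this is a real improvement over the paper's one-line argument. Likewise, your stopping-time coupling up to $\tau_n=\inf\{t:L_n(t)\ge\sqrt n\}$ makes explicit the identification of $\{L_n<\sqrt n\ \forall t\}$ with the extinction event of $L_\infty$, which the paper leaves implicit behind the citation to Diekmann et al.; you are also candid that the uniform control of the coupling error on $[0,\tau_n]$ is the one step neither you nor the paper fully carries out. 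In short: same skeleton, but your version supplies the domination argument that the paper's proof of (i) actually requires.
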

\begin{proof}
\begin{enumerate}
\item
If $\alpha<0$, then $L_\infty$ is sub-critical and dies out with probability $1$. If $\alpha=0$, then $L_\infty$ is critical and dies out with probability $1$, since $P(Y=1)\neq1$ \cite{Haccou}.
\item
If $\alpha>0$, then $L_\infty$ is super-critical. It dies out with probability $\pi$ and with the remaining probability $1-\pi$, it grows exponentially with rate $\alpha$, that is $L_\infty\sim e^{\alpha t}$. As $N_n(t)\sim e^{(\lambda-\mu)t}$, if $\alpha<\lambda-\mu$, then $L_n(t)/N_n(t)\longrightarrow0$, when $t\longrightarrow\infty$.
\item 
If $\alpha>\lambda-\mu$, then $\alpha>0$ since $\lambda-\mu>0$. Thus, $L_\infty$ is super-critical. It dies out with probability $\pi$ and with the remaining probability $1-\pi$, it grows exponentially with rate $\alpha$. As $L_n \Longrightarrow L_\infty$, the same applies to $L_n$.
\end{enumerate}
\end{proof}
\begin{remark}
If $\alpha>\lambda-\mu$, then $L_\infty$ is super-critical. Thus the epidemic may take off. If it does, the \textit{number} of infected individuals grows initially with the rate $\alpha$ that is larger than the initial growth rate of the population ($\lambda-\mu$). After the initial phase, in the case of a major outbreak, several scenarios are possible. i)The population goes on growing exponentially, eventually with a lower rate; ii) due to the additive death rate of the infectious, the effective death rate of the population becomes larger than its birth rate, and the population process becomes a sub-critical process. The different scenarios are treated in Section \ref{deterministic} and illustrated by simulations in Section \ref{Simulations}.
\end{remark}
We have derived the probability of a minor outbreak of the epidemic branching process, stated and shown the main result of the initial phase of the epidemic.

\section{The deterministic SEIRS model}\label{deterministic}
Now we consider the corresponding deterministic model of the stochastic model studied above. As the population size is varying, we study  first the fractions system and then deduce the asymptotic behavior of the compartments sizes. In this section the deterministic sizes of the compartments S, E, I, R and the population size at the time $t$ are denoted  $S(t), E(t), I(t), R(t)$ and $N(t)$ respectively.
 
The corresponding deterministic SEIRS model of the model above is given by the following system of ordinary differential equations (ODE).
\begin{equation}
\begin{split}
&\frac{dS}{dt}=\lambda N+\rho R-\kappa S\frac{I}{N}-\mu S,\\
&\frac{dE}{dt}=\kappa S\frac{I}{N}-(\nu+\mu)E,\\
&\frac{dI}{dt}=\nu E-(\delta+\mu+\sigma)I,\\
&\frac{dR}{dt}=\delta I-(\rho+\mu)R,\\
&N=S+E+I+R,\\
&S(0)>0, E(0)>0, I(0)>0, R(0)\geq0.
\end{split}\label{SEIRSsystem}
\end{equation}
\begin{remark}
System (\ref{SEIRSsystem}) is the same as System (2) studied by Greenhalgh in \cite{Greenhalgh}, with a constant contact rate ($\beta(N)=\beta$), a constant death rate ($f(N)=\mu$) and without vaccination ($p=q=0$). But Greenhalgh assumed that the death rate $f(N)$ is a strictly monotone increasing continuously differentiable function of N. So, the model that we study is not a sub-model of that of Greenhalgh since we consider a constant death rate.
\end{remark}
From System (\ref{SEIRSsystem}), we have $dN/dt=(\lambda-\mu)N-\sigma I=(\lambda-\mu-\sigma i)N$, where $i$ is the \textit{fraction} of the infectious individuals. Thus, the population should grow if $\lambda>\mu+\sigma i$, stabilize if $\lambda=\mu+\sigma i$ and decrease if $\lambda<\mu+\sigma i$.
\begin{theorem}
 $N(t)$ is constant and positive ($N(t)=N(0)>0, \forall t>0$), if and only if the parameters verify the following equality:
\begin{equation}
\begin{split}
(\rho+\mu)\nu\kappa\sigma\lambda +\rho\kappa\nu\delta(\lambda-\mu)& \\
-(\rho+\mu)(\nu+\mu)(\delta+\mu+\sigma)[\kappa(\lambda-\mu)+\mu\sigma]&=0,
\end{split}\label{equation1}
\end{equation}
and the initial values verify
\begin{equation} \displaystyle \left\{ \begin{array}{l}
S(0)=(\nu\kappa)^{-1}(\nu+\mu)(\delta +\mu +\sigma)N(0),\\\\
E(0)=(\nu\sigma)^{-1}(\delta +\mu +\sigma)(\lambda-\mu)N(0),\\\\
I(0)=\sigma^{-1}(\lambda-\mu)N(0),\\\\
R(0)=(\sigma(\rho+\mu))^{-1}\delta(\lambda-\mu)N(0),\\\\
\textrm{ with } N(0)>0.
\end{array} \right.\label{equation12}
\end{equation}\label{SEIRSconstant}
\end{theorem}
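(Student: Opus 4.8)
The plan is to prove the equivalence by exploiting the identity $dN/dt=(\lambda-\mu)N-\sigma I$ recorded just before the statement, which ties constancy of $N$ to the whole state being at rest. The starting remark is that ``$N(t)\equiv N(0)$'' is equivalent to ``$dN/dt\equiv0$'', and that this single scalar condition forces a cascade of equilibrium relations through the four equations of System~(\ref{SEIRSsystem}). I will treat the two implications separately, the forward one by running the cascade down through the compartments, and the converse by a fixed-point plus uniqueness argument.

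For the forward direction, suppose $N(t)=N(0)>0$ for all $t$. Then $dN/dt\equiv0$, so $dN/dt=(\lambda-\mu)N-\sigma I$ gives $I(t)=\frac{\lambda-\mu}{\sigma}N(0)$, a constant (this uses $\sigma>0$). Hence $dI/dt\equiv0$, and the third equation of (\ref{SEIRSsystem}) yields $E=\frac{\delta+\mu+\sigma}{\nu}I$, again constant; then $dE/dt\equiv0$ and the second equation gives $S=\frac{(\nu+\mu)E}{\kappa(I/N)}=\frac{(\nu+\mu)(\delta+\mu+\sigma)}{\nu\kappa}N(0)$; finally $R=N-S-E-I$ is constant, so $dR/dt\equiv0$ and the fourth equation forces $R=\frac{\delta}{\rho+\mu}I$. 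Evaluating these four constants at $t=0$ reproduces exactly (\ref{equation12}). The constraint (\ref{equation1}) then comes from the identity $S(0)+E(0)+I(0)+R(0)=N(0)$, which holds by definition of $N$: substituting the four expressions (each a multiple of $N(0)$) and dividing by $N(0)>0$ turns ``the coefficients sum to $1$'' into (\ref{equation1}) after clearing the common denominator $\nu\kappa\sigma(\rho+\mu)$.

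For the converse, assume (\ref{equation1}) and (\ref{equation12}). Condition (\ref{equation1}) is precisely what makes the four values in (\ref{equation12}) sum to $N(0)$, so the prescribed data is an admissible state with $N(0)>0$. By construction these values annihilate the right-hand sides of all four equations of (\ref{SEIRSsystem}) (they were obtained above by setting every derivative to zero), so the initial point is a fixed point of the system. Since $N(0)>0$, the vector field is $C^1$, hence locally Lipschitz, in a neighbourhood of this point, so by the Picard--Lindel\"of theorem the initial value problem has a unique solution; the constant trajectory is that solution, whence $N(t)=N(0)>0$ for all $t$.

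The only laborious step I anticipate is the algebraic reduction of $S(0)+E(0)+I(0)+R(0)=N(0)$ to the stated form (\ref{equation1}); this is a routine elimination of denominators that can be verified termwise or checked at a sample parameter point, and it carries no conceptual difficulty. The substantive content lies in the downward cascade of the forward direction and the uniqueness argument of the converse, both of which go through cleanly once the standing assumptions $\sigma>0$ and $\lambda>\mu$ are used, the former to divide by $\sigma$ and the latter to ensure the equilibrium values are positive.
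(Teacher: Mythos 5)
Your proposal is correct and takes essentially the same route as the paper: the paper's one-line proof (``by using successively the derivatives of $N, I, E, R$ and $S$'') is precisely your forward cascade, and your equilibrium-plus-uniqueness argument for the converse just makes explicit what the paper leaves implicit. One small remark: clearing the denominator $\nu\kappa\sigma(\rho+\mu)$ in $S(0)+E(0)+I(0)+R(0)=N(0)$ does not literally yield Equation (\ref{equation1}) but an equivalent relation (it equals $-1/\mu$ times the left-hand side of (\ref{equation1}) after writing $\lambda=(\lambda-\mu)+\mu$), so the ``routine'' algebra you defer is genuinely routine but involves a regrouping, not just removing denominators --- a presentational detail, not a gap.
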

\begin{proof}
By using successively the derivatives of $N, I, E, R$ and $S$, one gets that $N(t)$ is constant and positive, if and only if $I, E, R$ and $S$ are constant, the parameters verify Equation (\ref{equation1}) and the initial values verify System (\ref{equation12}).
\end{proof}
Generally, Equation (\ref{equation1}) and System (\ref{equation12}) are not verified, thus $N(t)$ is not constant. Therefore, we consider the \textit{fractions} $s=S/N, e=E/N, i=I/N$ and $r=R/N$. By System (\ref{SEIRSsystem}), one gets
\begin{equation}
\begin{split}
&\frac{ds}{dt}=\lambda-\lambda s+\rho r+(\sigma-\kappa)si,\\
&\frac{de}{dt}=\kappa si-(\lambda+\nu)e+\sigma ei,\\
&\frac{di}{dt}=\nu e-(\lambda+\delta+\sigma)i+\sigma i^2,\\
&\frac{dr}{dt}=-(\lambda+\rho)r+\delta i +\sigma ri,\\
&s+e+i+r=1.
\end{split}
\label{seirsf system}
\end{equation}
This system is equivalent to System (2.2) in \cite{Greenhalgh}, with $p=q=0$.
\begin{remark}
The natural death rate $\mu$ does not intervene in the derivatives of the fractions. This is logic, since this rate  is the same for all the compartments, it has no effect on the fractions.
\end{remark}
Since $r=1-s-e-i$, it is enough to study the system
\begin{equation}
\begin{split}
&\frac{ds}{dt}=\lambda+\rho-(\lambda+\rho) s-\rho e-\rho i+(\sigma-\kappa)si,\\
&\frac{de}{dt}=\kappa si-(\lambda+\nu)e+\sigma ei,\\
&\frac{di}{dt}=\nu e-(\lambda+\delta+\sigma)i+\sigma i^2,
\end{split}\label{seirs system}
\end{equation}
in the domain 
\begin{equation}
D=\{(s,e,i);s\geq0, e\geq0, i\geq0, s+e+i\leq 1\}.
\end{equation}
\begin{theorem}
The domain D is positively invariant for System \ref{seirs system}.
\end{theorem}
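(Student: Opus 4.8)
The plan is to establish positive invariance by examining the vector field of the reduced system (\ref{seirs system}) on the boundary $\partial D$ and showing that along each face the flow either points into $D$ or is tangent to it; the standard invariance criterion (Nagumo's subtangentiality condition) then yields the claim. Since $D$ is the intersection of the four half-spaces $s\ge 0$, $e\ge 0$, $i\ge 0$ and $s+e+i\le 1$, its boundary splits into four faces, and on each I would compute the time-derivative of the active constraint and check its sign.

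First I would dispatch the three coordinate faces. On $\{s=0\}$ the first equation gives $ds/dt=\lambda+\rho(1-e-i)$; since every point of this face satisfies $e+i\le 1$ and $\rho\ge 0$, this is at least $\lambda>0$, so $s$ is increasing and the flow points strictly inward. On $\{e=0\}$ the second equation reduces to $de/dt=\kappa s i\ge 0$, and on $\{i=0\}$ the third reduces to $di/dt=\nu e\ge 0$; in both cases the vanishing coordinate cannot become negative. Hence no trajectory can cross a coordinate hyperplane to leave $D$.

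The one genuine computation is the face $\{s+e+i=1\}$, where I must show $\frac{d}{dt}(s+e+i)\le 0$. Adding the three equations and regrouping, the linear part collapses into $\lambda+\rho-(\lambda+\rho)(s+e+i)$, the quadratic terms $(\sigma-\kappa)si+\kappa si+\sigma ei+\sigma i^2$ combine into $\sigma i(s+e+i)$, and the leftover linear-in-$i$ contributions give $-(\delta+\sigma)i$, so that
\[
\frac{d}{dt}(s+e+i)=\lambda+\rho-(\lambda+\rho)(s+e+i)-(\delta+\sigma)i+\sigma i(s+e+i).
\]
Setting $s+e+i=1$ makes this reduce to $-(\delta+\sigma)i+\sigma i=-\delta i\le 0$, so the constraint $s+e+i\le 1$ is never violated. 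The bookkeeping in this reduction (matching the $\kappa$ terms and folding the quadratic terms into $\sigma i(s+e+i)$) is the only place a sign error could slip in, so I would carry it out term by term.

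The main subtlety, rather than the main difficulty, lies at the edges and vertices of $D$ where two or more constraints are active simultaneously and the field is merely tangent; for example $de/dt=\kappa s i$ vanishes whenever $s=0$ or $i=0$, so there the inequalities only give $\dot g\ge 0$ with possible equality and strict inward flow is unavailable. To close the argument uniformly on these intersections I would appeal to the standard invariance theorem for closed sets under the subtangentiality condition, whose hypothesis is exactly the nonnegativity of each constraint's derivative on its own zero set that the face-by-face computation above has verified.
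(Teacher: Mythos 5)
Your proposal is correct and takes essentially the same route as the paper: the same face-by-face check of the boundary, including the key identity $\frac{d}{dt}(s+e+i)=-\delta i\le 0$ on the face $s+e+i=1$. The only difference is that you make the final step explicit by invoking Nagumo's subtangentiality criterion, whereas the paper simply asserts the conclusion from the sign conditions.
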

\begin{proof}
If $s=0$, then $ds/dt=\lambda+\rho(1-e-i)>0$. If $e=0$, then $de/dt=\kappa si\geq0$.
If $i=0$, then $di/dt=\nu e\geq0$. If $s+e+i=1$, then $d(s+e+i)/dt=-\delta i\leq0$.
Then every solution of System \ref{seirs system} starting in $D$, remains there for all $t>0$. That is $D$ is positively invariant for System \ref{seirs system}
\end{proof}
In the following, we use the next generation matrix (NGM) described in \cite{Odo1} to derive a threshold parameter, with threshold value $1$ for System (\ref{seirs system}).\\
By setting $ds/dt=de/dt=di/dt=0$ with $e=i=0$ in System (\ref{seirs system}), we get $s=1$. Then, $(1,0,0)$ is the unique disease free equilibrium (DFE) of System (\ref{seirs system}). $e$ and $i$ are the infected \textit{fractions} of the model. Thus, the \emph{infected subsystem} is 
\begin{equation*}
\begin{split}
&\frac{de}{dt}=\kappa si-(\lambda+\nu)e+\sigma ei,\\
&\frac{di}{dt}=\nu e-(\lambda+\delta+\sigma)i+\sigma i^2.
\end{split}
\end{equation*}
Hence, the \emph{linearized infected subsystem} at the DFE is 
\begin{equation}
\begin{split}
&\frac{de}{dt}=\kappa i-(\lambda+\nu)e,\\
&\frac{di}{dt}=\nu e-(\lambda+\delta+\sigma)i.
\end{split}\label{linearinfectedeq}
\end{equation}
Let $x=(e,i)^t$ be the vector of the infected \textit{fractions}. Thus, System (\ref{linearinfectedeq}) is equivalent to
\[\stackrel{.}{x}=(T+\Sigma)x,\]
with 
\[T=\left(\begin{array}{cc} 0&\kappa\\0&0\end{array}\right) \textrm{ and } \Sigma=\left(\begin{array}{cc}-(\lambda+\nu)&0\\ \nu&-(\lambda+\delta+\sigma) \end{array}\right).\]
$T$ is the transmissions matrix and $\Sigma$ is the transitions matrix. Then, the next generation matrix with large domain is 
\[K_L = -T\Sigma^{-1}.\]
Let $R_1$ be the spectral radius of $K_L$. We have 
\begin{equation}
 R_1=\rho(K_L)=\frac{\kappa\nu}{(\lambda+\nu)(\lambda+\delta+\sigma)}.\label{R1}
\end{equation} 
 An equilibrium is said to be stable if nearby solutions stay nearby for all future time \cite[p. 175]{Hirsch}. More precisely an equilibrium $x^*$ is said to be stable, if for every neighborhood $V$ of $x^*$ there is a neighborhood $V_1$ of $x^*$, such that every solution starting in $V_1$ remains in $V$ for all $t>0$. If $V_1$ can be chosen such that $\lim_{t\rightarrow\infty} x(t)=x^*$, then $x^*$ is said to be asymptotically stable. An equilibrium is said to be unstable, when it is not stable. An equilibrium $x^*$ is said to be globally asymptotically stable (GAS) in an invariant set $D$, $(x^*\in D)$, if it is locally stable and $\lim_{t\rightarrow\infty} x(t)=x^*$, for every solution $x(t)$ starting in $D$.  
\begin{theorem}
The disease free equilibrium (DFE) of System (\ref{seirs system}) is globally asymptotically stable (GAS) in $D$, if $R_1\leq1$, and unstable if $R_1>1$.  \label{DFE}
\end{theorem}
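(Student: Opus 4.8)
The plan is to separate the statement into the local behaviour at the DFE, which the next generation matrix already set up handles, and the genuinely global behaviour, which needs a Lyapunov estimate together with a topological argument.

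First I would write down the Jacobian of System (\ref{seirs system}) at the DFE $(1,0,0)$. Since new infections feed only the $(e,i)$ block and the $s$-equation decouples to first order, this Jacobian is block lower triangular, with scalar entry $-(\lambda+\rho)<0$ and $2\times2$ infected block exactly $T+\Sigma$. Hence the stability type of the DFE is decided solely by the spectral abscissa $\omega$ of $T+\Sigma$. As $T\ge0$ and $\Sigma$ is a Metzler matrix, the next generation matrix theory of \cite{Odo1} gives $\mathrm{sign}(\omega)=\mathrm{sign}(R_1-1)$ with $R_1$ as in (\ref{R1}). Therefore for $R_1>1$ the matrix $T+\Sigma$, and so the full Jacobian, has an eigenvalue of positive real part and the DFE is unstable; for $R_1<1$ all three eigenvalues have negative real part and the DFE is locally asymptotically stable. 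This settles instability and local stability at once.

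Next, for the global part, let $\mathbf{c}=(c_1,c_2)^{\top}>0$ be the positive left Perron eigenvector of the irreducible Metzler matrix $T+\Sigma$ for $\omega$, and set the candidate $V=c_1 e+c_2 i\ge0$, which vanishes only on $\{e=i=0\}$. Using $\mathbf{c}^{\top}(T+\Sigma)=\omega\,\mathbf{c}^{\top}$, the bound $s\le1$, and $1-s=e+i+r\ge0$ on $D$, I would compute along solutions of System (\ref{seirs system})
\[\dot V=(\omega+\sigma i)\,V-c_1\kappa(1-s)i.\]
When $R_1\le1$ one has $\omega\le0$, the last term is $\le0$, and near the DFE the factor $\sigma i$ is dominated by $|\omega|$; bounding $i\le V/c_2$ then gives $\dot V\le(\omega+\sigma V/c_2)V<0$ for small $V$. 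This re-proves local attractivity and exhibits a forward-invariant trapping neighbourhood of the DFE on which $V$ strictly decreases.

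The \emph{main obstacle} is that the disease-death terms $\sigma ei$ and $\sigma i^2$ — the inflation of the infected \emph{fractions} caused by the shrinking denominator $N$ — are positive, so the standard monotone decomposition $\dot{\mathbf{x}}=(T+\Sigma)\mathbf{x}-\hat f$ has $\hat f\not\ge0$, and $\dot V$ may be positive on part of $\partial D$ (near the face $s=0$ with $i$ of order one). Thus $V$ by itself does not deliver global attractivity, and this is precisely where the effort lies. To close the gap I would argue topologically: $D$ is compact and positively invariant (already shown), so every orbit has a nonempty compact $\omega$-limit set in $D$. I would then verify that for $R_1\le1$ the DFE is the \emph{only} equilibrium in $D$ (setting $\dot s=\dot e=\dot i=0$ and checking that the resulting polynomial has no root with $e,i>0$ when $R_1\le1$), and rule out nonconstant periodic orbits by the Li--Muldowney second-additive-compound (Bendixson) criterion, forming $J^{[2]}$ for System (\ref{seirs system}) and producing a vector norm whose logarithmic norm of $J^{[2]}$ is negative on $D$ when $R_1\le1$. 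With no periodic or recurrent orbits and a single locally stable equilibrium, every $\omega$-limit set must reduce to $\{(1,0,0)\}$, which yields global asymptotic stability; finally $r=1-s-e-i$ forces $s\to1$. I expect the compound-matrix estimate to be the hard computational step, precisely because it must absorb the $\sigma$-inflation terms that defeat the plain Lyapunov function.
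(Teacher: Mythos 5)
Your local analysis (block-triangular Jacobian, sign of the spectral abscissa of $T+\Sigma$) is correct and is essentially the paper's Routh--Hurwitz computation, but the global part has a genuine gap, and the gap stems from choosing the wrong linear combination. The paper takes $V=\nu e+(\lambda+\nu)i$, i.e.\ the row vector $(\nu,\lambda+\nu)$, which is \emph{not} the left Perron eigenvector of $T+\Sigma$ but the vector annihilating its first column. With that choice the $e$-terms cancel identically and $\dot V=i\,L(s,e,i)$ with $L(s,e,i)=\nu\kappa s+\nu\sigma e+(\lambda+\nu)\sigma i-(\lambda+\nu)(\lambda+\delta+\sigma)$ affine; an affine function on the simplex $D$ attains its maximum at a vertex, and the four vertex values are $-(\lambda+\nu)(\lambda+\delta+\sigma)$, $(\lambda+\nu)(\lambda+\delta+\sigma)(R_1-1)$, $\nu\sigma-(\lambda+\nu)(\lambda+\delta+\sigma)<0$ and $-(\lambda+\nu)(\lambda+\delta)<0$, so $\dot V\le0$ on all of $D$ precisely when $R_1\le1$. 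The $\sigma$-inflation terms you single out as the main obstacle are therefore absorbed globally, not just near the DFE, and LaSalle's invariance principle finishes the proof (the largest invariant subset of $\{\dot V=0\}$ in $D$ is $\{(1,0,0)\}$). Your obstacle is an artifact of taking $\mathbf{c}$ to be the Perron eigenvector.

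The detour you propose to close the gap does not work as stated. In $\mathbb{R}^3$, ``no periodic orbits plus a unique, locally stable equilibrium'' does not force every $\omega$-limit set to collapse to that equilibrium; the Li--Muldowney theorem that makes such an argument rigorous requires, besides the negative logarithmic norm of $J^{[2]}$, a compact absorbing set in the \emph{interior} of $D$ (uniform persistence), and that hypothesis is exactly what must fail here because the candidate global attractor $(1,0,0)$ lies on the boundary face $\{e=i=0\}$. You also leave the compound-matrix estimate uncarried-out, and your linearization plus trapping-region argument does not cover the critical case $R_1=1$ (there $\omega=0$, so $\sigma i$ is not dominated by $|\omega|$), whereas the theorem asserts global asymptotic stability for $R_1\le1$ inclusive. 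As written, your proposal establishes local stability for $R_1<1$ and instability for $R_1>1$, but not the global claim.
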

\begin{proof}
Let $f(s,e,i)$ be the Rhs of System (\ref{seirs system}). Then, 
\[f(s,e,i)=\left(\begin{array}{c}\lambda+\rho-(\lambda+\rho)s-\rho e-\rho i+(\sigma-\kappa)si\\
\kappa si-(\lambda+\nu)e+\sigma ei\\
\nu e-(\lambda+\delta+\sigma)i+\sigma i^2\end{array}\right).\]
The Jacobian of $f$ at the disease free equilibrium is 
\[Df(1,0,0)=\left(\begin{array}{ccc}-(\lambda+\rho)&-\rho&\sigma-\kappa-\rho\\0&-(\lambda+\nu)&\kappa\\0&\nu&-(\lambda+\delta+\sigma)
\end{array}\right).\]
The characteristic polynomial of $Df(1,0,0)$ is 
\[P(x)=(-\lambda-\rho-x)[x^2+(2\lambda+\nu+\delta+\sigma)x+(\lambda+\nu)(\lambda+\delta+\sigma)-\nu\kappa].\]
$-(\lambda+\rho)$ is an evident negative root of $P(x)$. Thus, by the Routh-Hurwitz criterion \cite[page 11]{Routh},  all the roots of $P(x)$ has negative real part if and only if $(\lambda+\nu)(\lambda+\delta+\sigma)-\nu\kappa>0$. And we have $(\lambda+\nu)(\lambda+\delta+\sigma)-\nu\kappa>0\Longleftrightarrow R_1<1$.
Thus, the disease free equilibrium is locally asymptotically stable if $R_1<1$, and unstable if $R_1>1$.

Let $V$ denote the function defined on $D$ by $V(s,e,i)=\nu e+(\lambda+\nu)i$. Then, 
\begin{align*}\stackrel{.}{V}(s,e,i)&=\nu\frac{de}{dt}+(\lambda+\nu)\frac{di}{dt}\\
&=\nu[\kappa si-(\lambda+\nu)e+\sigma ei]+(\lambda+\nu)[\nu e-(\lambda+\delta+\sigma)i+\sigma i^2]\\
&=i[\nu\kappa s+\nu\sigma e+(\lambda+\nu)\sigma i-(\lambda+\nu)(\lambda+\delta+\sigma)]\\
&=iL(s,e,i),
\end{align*}
with $L(s,e,i)=\nu\kappa s+\nu\sigma e+(\lambda+\nu)\sigma i-(\lambda+\nu)(\lambda+\delta+\sigma)$. The affinity of $L$ implies that it achieves its maximum at the extreme points of the boundary of the closed set $D$. But $L(0,0,0)=-(\lambda+\nu)(\lambda+\delta+\sigma), L(0,0,1)=-(\lambda+\nu)(\lambda+\delta), L(0,1,0)=-\lambda\sigma-(\lambda+\nu)(\lambda+\delta)$ and $L(1,0,0)=\nu\kappa-(\lambda+\nu)(\lambda+\delta+\sigma)=(\lambda+\nu)(\lambda+\delta+\sigma)(R_1-1)$. Thus, $\stackrel{.}{V}\leq0$ in $D$ if $R_1\leq1$. Then, $V$ is a Lyapunov function of System (\ref{seirs system}). The only invariant subset of the set with $\stackrel{.}{V}=0$ is $\{(1,0,0)\}$. It follows from LaSalle's Invariance Principle \cite[p. 200]{Hirsch}, that  the disease free equilibrium (DFE) is globally asymptotically stable (GAS) in $D$, when $R_1\leq1$. 
\end{proof}
By Equations (\ref{R1}) and (\ref{alpha}), we have
\begin{align*}
R_1>1&\Longleftrightarrow\frac{\kappa\nu}{(\lambda+\nu)(\lambda+\delta+\sigma)}>1\\
 &\Longleftrightarrow\kappa\nu>(\lambda+\nu)(\lambda+\delta+\sigma)\\
 &\Longleftrightarrow\kappa\nu>\frac{1}{4}\left((2\lambda+\nu+\delta+\sigma)^2-(\nu-\delta-\sigma)^2\right)\\
 &\Longleftrightarrow\frac{(\nu-\delta-\sigma)^2}{4}+\kappa\nu>\frac{(2\lambda+\nu+\delta+\sigma)^2}{4}\\
 &\Longleftrightarrow\sqrt{\frac{(\nu-\delta-\sigma)^2}{4}+\kappa\nu}>\lambda+\frac{\nu+\delta+\sigma}{2}\\
 &\Longleftrightarrow-\left(\mu+\frac{\nu+\delta+\sigma}{2}\right)+\sqrt{\frac{(\nu-\delta-\sigma)^2}{4}+\kappa\nu}>\lambda-\mu\\
 &\Longleftrightarrow\alpha>\lambda-\mu.
\end{align*}
It follows that the fraction's threshold $R_1$ exceeds $1$ if and only if the Malthusian parameter  $\alpha$, exceeds the initial growth rate of the population $\lambda-\mu$. That is, we have the sign relation $sign(R_1-1)=sign(\alpha-(\lambda-\mu))$. Thus, the global stability of the disease free equilibrium of the fraction's system when $R_1<1$, confirms that if $\alpha<\lambda-\mu$, then the infected \textit{fraction} vanishes even if the epidemic takes off (Theorem (\ref{initial phase})  (ii)).

Greenhalgh has shown \cite[Theorem 2.3]{Greenhalgh} that if $R_1>1$, then System (\ref{seirs system}) has at least one endemic equilibrium, and that this equilibrium is unique and locally asymptotically stable (LAS) when the average duration of immunity exceeds both the average infectious and incubation periods, that is $\delta>\rho$ and $\nu>\rho$. We have not proved, but we strongly believe that if $R_1>1$, then System  (\ref{seirs system}) has one and only one endemic equilibrium, and that this equilibrium is globally asymptotically stable in the interior of $D$. The simulations that we made support this conjecture (Figure \ref{seirs1} (c) and (d)).
\begin{theorem}
Let $(S(t),E(t),I(t),R(t))$ be a solution of System (\ref{SEIRSsystem}) and $R_0$ denoted the basic reproduction number given by Equation (\ref{R0}).
\begin{enumerate}
\item If $R_0<1$, then $(S(t),E(t),I(t),R(t))\longrightarrow(\infty,0,0,0)$;
\item if $R_0=1$, then $(S(t),E(t),I(t),R(t))\longrightarrow(\infty,E^*,I^*,R^*)$,\\ with $E^*>0,I^*>0,R^*>0$;
\item if $R_0>1\geq R_1$, then $(S(t),E(t),I(t),R(t))\longrightarrow(\infty,\infty,\infty,\infty)$.
\end{enumerate}\label{asympt1}
\end{theorem}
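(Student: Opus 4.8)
The plan is to treat all three cases through a single mechanism: couple the behaviour of the \emph{fractions} (governed by Theorem \ref{DFE}) to an asymptotically autonomous linear system for the infected compartments whose growth rate is the Malthusian parameter $\alpha$. First I would note that in each case $R_1\le1$: indeed $R_0<1$ and $R_0=1$ give $\alpha<0$ and $\alpha=0$ respectively, so by the sign relation $\mathrm{sign}(R_1-1)=\mathrm{sign}(\alpha-(\lambda-\mu))$ and $\lambda>\mu$ one gets $R_1<1$, while case (iii) assumes $R_1\le1$ outright. By Theorem \ref{DFE} the fraction system (\ref{seirs system}) then converges to the DFE, so $s(t)\to1$ and $e(t),i(t),r(t)\to0$. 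Inserting $i(t)\to0$ and $\lambda>\mu$ into $dN/dt=(\lambda-\mu-\sigma i)N$ shows $dN/dt\ge\tfrac12(\lambda-\mu)N$ for $t$ large, whence $N(t)\to\infty$. Since $S=sN$ with $s\to1$, this already gives $S(t)\to\infty$ in all three cases, settling the first coordinate.

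For the other coordinates I would isolate the infected subsystem in absolute numbers, $\dot E=\kappa s I-(\nu+\mu)E$, $\dot I=\nu E-(\delta+\mu+\sigma)I$, and write it as $\dot x=(M+B(t))x$ with $x=(E,I)^t$,
\[
M=\begin{pmatrix}-(\nu+\mu)&\kappa\\ \nu&-(\delta+\mu+\sigma)\end{pmatrix},\qquad
B(t)=\begin{pmatrix}0&\kappa(s(t)-1)\\ 0&0\end{pmatrix}.
\]
Here $M$ is a Metzler matrix whose dominant eigenvalue is exactly the $\alpha$ of (\ref{alpha}) (its characteristic equation coincides with the one defining $\alpha$), and $B(t)\to0$ because $s\to1$; moreover, as a hyperbolic GAS equilibrium is approached exponentially, $1-s(t)=O(e^{-\gamma t})$ for some $\gamma>0$ whenever $R_1<1$, so $B\in L^1([0,\infty))$. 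The Metzler structure makes the flow monotone, so comparison with constant-coefficient systems is available, and $\mathrm{sign}(\alpha)=\mathrm{sign}(R_0-1)$ turns the three cases into $\alpha<0$, $\alpha=0$, $\alpha>0$.

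Cases (i) and (iii) I would dispatch by one-sided comparison. When $R_0<1$ ($\alpha<0$), $s\le1$ gives $\dot x\le Mx$ componentwise; since $M$ has negative dominant eigenvalue the majorant decays to $0$, forcing $E,I\to0$, and then $\dot R=\delta I-(\rho+\mu)R$ gives $R\to0$, i.e. $(S,E,I,R)\to(\infty,0,0,0)$. When $R_0>1$ ($\alpha>0$), I would pick $\epsilon>0$ small enough that the Metzler matrix $M_\epsilon$ obtained by replacing $\kappa$ with $\kappa(1-\epsilon)$ still has positive dominant eigenvalue (possible by continuity, as $\alpha>0$); for $t$ large $s\ge1-\epsilon$, so $\dot x\ge M_\epsilon x$ and the minorant grows like $e^{\alpha_\epsilon t}$ along a positive eigenvector, forcing $E,I\to\infty$ and then $R\to\infty$, i.e. $(S,E,I,R)\to(\infty,\infty,\infty,\infty)$.

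The hard case, and the main obstacle, is (ii) with $R_0=1$, i.e. $\alpha=0$, where $E,I,R$ must converge to strictly positive finite limits. Now $M$ has a simple zero eigenvalue (dominant) with positive right eigenvector $v_0=(v_{0,1},v_{0,2})$ and a strictly negative second eigenvalue. Upper comparison with $\dot y=My$ shows $x$ is bounded. For a positive lower bound I would use a subsolution $c(t)v_0$: from $Mv_0=0$ and $s\le1$, the requirement $\frac{d}{dt}(c v_0)\le(M+B)c v_0$ reduces to $\dot c=\tfrac{\kappa v_{0,2}}{v_{0,1}}(s-1)c$, whose solution $c(t)=c(0)\exp\!\big(\tfrac{\kappa v_{0,2}}{v_{0,1}}\int_0^t(s-1)\,d\tau\big)$ tends to $c_\infty>0$ since $\int_0^\infty(1-s)\,d\tau<\infty$; choosing $c(0)$ small and using monotone comparison gives $\liminf E,\ \liminf I\ge c_\infty\min(v_{0,1},v_{0,2})>0$. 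Finally, to upgrade the sandwich to genuine convergence I would invoke the $L^1$-perturbation (Levinson) theory for $\dot x=(M+B)x$ with $B\in L^1$: the eigenvalues $0>-c$ of $M$ satisfy the required dichotomy, so every solution is asymptotic to a solution of $\dot y=My$ and hence converges to a point on the ray through $v_0$; positivity (from the subsolution) forces that point to be strictly positive, giving $E\to E^*>0$, $I\to I^*>0$, and then $R\to R^*=\delta I^*/(\rho+\mu)>0$. The delicate points will be the simplicity of the zero eigenvalue and the verification of the dichotomy/integrability hypotheses behind the Levinson step, while ruling out collapse of the limit to $0$ is exactly what the subsolution supplies.
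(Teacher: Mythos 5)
Your proposal is correct in substance but takes a genuinely different route from the paper's. The two arguments share the same opening: observe that $R_0>R_1$ forces $R_1\le1$ in all three cases, invoke Theorem \ref{DFE} to get $(s,e,i,r)\to(1,0,0,0)$, deduce $N\to\infty$ from $dN/dt=(\lambda-\mu-\sigma i)N$, and hence $S\to\infty$. From there the paper studies the scalar ratio $E/I$, shows it satisfies an asymptotically autonomous Riccati equation, solves that equation explicitly to get $E/I\to(\delta+\sigma-\nu+\sqrt{\Delta})/(2\nu)$, and substitutes back to obtain $dI/dt\to\alpha I$ and $dE/dt\to\alpha E$, concluding via $\mathrm{sign}(\alpha)=\mathrm{sign}(R_0-1)$; the recovered class is handled at the end exactly as you do. You instead keep the infected subsystem as a planar linear system $\dot x=(M+B(t))x$ with $M$ a constant Metzler matrix whose dominant eigenvalue is $\alpha$ and $B(t)\to0$, and use monotone comparison with constant-coefficient majorants/minorants for $R_0\ne1$ and Levinson's $L^1$-perturbation theorem plus a subsolution along the Perron eigenvector for $R_0=1$. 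The paper's route is more elementary and yields the explicit limiting ratio $E/I$, but it leans on an unargued asymptotically-autonomous limit and, in the critical case, passes from ``$dI/dt\to\alpha I$ with $\alpha=0$'' to convergence of $I$ to a \emph{positive} constant without justification; your decomposition buys exactly what is missing there, since the integrability of $1-s$ (from hyperbolicity of the DFE when $R_1<1$) and the subsolution bound are what rule out $I\to0$ or oscillation in case (ii). You correctly identify that $R_0=1$ is the delicate case; your treatment of it is, if anything, more complete than the paper's.
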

\begin{remark}
The case $R_1>1$ is treated in Theorem \ref{R2}.
\end{remark}
\begin{proof} We have \[R_0=\frac{\nu\kappa}{(\mu+\nu)(\mu+\delta+\sigma)} \textrm{ and } R_1=\frac{\nu\kappa}{(\lambda+\nu)(\lambda+\delta+\sigma)}.\]
Then, $R_0>R_1$, since $\lambda>\mu$. Therefore, in the three cases of the Theorem \ref{asympt1}, we have $R_1\leq1$. Let us assume that $R_1\leq1$. Thus, by Theorem (\ref{DFE}), $(s,e,i,r)\longrightarrow(1,0,0,0)$ when $t\longrightarrow\infty$. $dN/dt=(\lambda-\mu)N-\sigma I=(\lambda-\mu-\sigma i)N$. Then, $dN/dt\longrightarrow(\lambda-\mu)N$, when $t\longrightarrow\infty$.
Thus, $N\longrightarrow\infty$, when $t\longrightarrow\infty$  because $\lambda>\mu$.
Therefore, $S\longrightarrow\infty$, when  $t\longrightarrow\infty$, since  $S/N\longrightarrow1$. By using the derivatives of $E$ and $I$, one gets
\[\left(\frac{E}{I}\right)'=\kappa s+(\delta+\sigma-\nu)\frac{E}{I}-\nu\left(\frac{E}{I}\right)^2. \textrm{ Where the prime denotes the derivative}.\]
\[\textrm{Then, }\left(\frac{E}{I}\right)'\longrightarrow\kappa +(\delta+\sigma-\nu)\frac{E}{I}-\nu\left(\frac{E}{I}\right)^2 \textrm{, when } t\longrightarrow\infty.\]
Thus $E/I$ can be approximate by a solution of the following equation, when $t\longrightarrow\infty$.
\begin{equation}
y'=\kappa+(\delta+\sigma-\nu)y-\nu y^2 \label{Ric}
\end{equation} 
Equation (\ref{Ric}) is a Riccati's equation \cite{Haaneim}. By solving it, one gets
\begin{equation*}
y:t\longmapsto \left(Ce^{\sqrt{\Delta}t}-\frac{\nu}{\sqrt{\Delta}}\right)^{-1}+\frac{\delta+\sigma-\nu+\sqrt{\Delta}}{2\nu}, \textrm{ with } C > 0 \textrm{, where } \Delta=(\delta+\sigma-\nu)^2+4\nu\kappa. 
\end{equation*} 
Then, $E/I\longrightarrow (\delta+\sigma-\nu+\sqrt{\Delta})/(2\nu)$, when  $t\longrightarrow\infty$.\\
 We have $dI/dt=\nu E-(\delta+\mu+\sigma)I=[\nu(E/I)-(\delta+\mu+\sigma)]I$. Thus, by substituting $E/I$ by its asymptotic value, one gets 
 \[\frac{dI}{dt}\longrightarrow\left[\frac{\delta+\sigma-\nu+\sqrt{\Delta}}{2}-(\delta+\mu+\sigma)\right]I \textrm{, when } t\longrightarrow\infty;\]
and 
\begin{align*}
\frac{\delta+\sigma-\nu+\sqrt{\Delta}}{2}-(\delta+\mu+\sigma)&=-\left(\mu+\frac{\delta+\sigma+\nu}{2}\right)+\frac{\sqrt{\Delta}}{2}\\
&=-\left(\mu+\frac{\delta+\sigma+\nu}{2}\right)+\sqrt{\frac{(\delta+\sigma-\nu)^2}{4}+\kappa\nu}\\
&=\alpha \textrm{, the Malthusian parameter given by Equation (\ref{alpha})}.
\end{align*}
Therefore, $dI/dt\longrightarrow\alpha I$, when $t \longrightarrow\infty$.\\ 
As $dE/dt=\kappa SI/N-(\nu+\mu)E$, $S/N\longrightarrow1$ and $E/I\longrightarrow(\delta+\sigma-\nu+\sqrt{\Delta})/(2\nu)$, after some algebra, one gets $dE/dt\longrightarrow\alpha E$, when $t\longrightarrow\infty$.
As $sign(\alpha)=sign(R_0-1)$, 
\begin{align*}
(E,I)\longrightarrow&(0,0) \textrm{ if } R_0<1;\\
(E,I)\longrightarrow&(E^*,I^*) \textrm{, with } E^*>0\textrm{ and }I^*>0, \textrm{ if } R_0=1;\\
(E,I)\longrightarrow&(\infty,\infty) \textrm{ if } R_0>1.
\end{align*}
For the \textit{number} of the recovered $R$, as $dR/dt=\delta I-(\rho+\mu)R$, it is obvious that $R$ has the same asymptotic behavior as $I$.
\end{proof}
\begin{remark}
In the proof, we have shown that if $R_1\leq1$, then the Malthusian parameter $\alpha$ of the stochastic model, is also the common asymptotic growth rate of  the compartments E and I, and $\lambda-\mu$ is the asymptotic growth rate of the population. Since $sign(R_1-1)=sign(\alpha-(\lambda-\mu))$, this is coherent with Theorem \ref{initial phase} (ii).
\end{remark}
Theorem \ref{asympt1} gives the asymptotic behavior of the compartments sizes, when $R_1\leq1$. If $R_1>1$, then the fraction disease free equilibrium is unstable. Therefore, the disease will remain endemic in the population in term of the \textit{fraction} infected. The following theorem gives the asymptotic behavior of the compartments sizes, when $R_1>1$, assuming that the fraction system admits an endemic equilibrium that is globally asymptotically stable in the interior of the feasible region $D$. 
\begin{theorem}
Assume that  $R_1>1$ and that System (\ref{seirs system}) has a unique endemic equilibrium $(s^*, e^*, i^*)$ that is globally asymptotically stable in $\stackrel{o}{D}$, and set 
\begin{equation}
R_2=\frac{\lambda}{\mu+\sigma i^*}.
\end{equation}
\begin{enumerate}
\item If $R_2>1$, then $(S,E,I,R)\longrightarrow(\infty,\infty,\infty,\infty)$.
\item If $R_2=1$, then $(S,E,I,R)\longrightarrow(S^*,E^*,I^*,R^*)$, \\
with $S^*>0, E^*>0, I^*>0$, $R^*>0$.
\item If $R_2<1$, then $(S,E,I,R)\longrightarrow(0,0,0,0)$.
\end{enumerate}\label{R2}
\end{theorem}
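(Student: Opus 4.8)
The plan is to reduce the asymptotic behavior of the four compartment sizes to that of the total population size $N(t)$, exploiting the assumed global stability of the fraction system. Since $(s,e,i,r)\to(s^*,e^*,i^*,r^*)$ with all four coordinates strictly positive (the equilibrium lies in $\stackrel{o}{D}$, and $r^*=1-s^*-e^*-i^*>0$), and since $S=sN$, $E=eN$, $I=iN$, $R=rN$, it suffices to determine $\lim_{t\to\infty}N(t)$; the compartment limits then follow by multiplying the fraction limits by the population limit. The starting point is the relation $dN/dt=(\lambda-\mu-\sigma i(t))N$ derived above, which I would integrate as $N(t)=N(0)\exp\!\left(\int_0^t(\lambda-\mu-\sigma i(s))\,ds\right)$. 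Writing $\beta^*:=\lambda-\mu-\sigma i^*$, the definition of $R_2$ gives the sign relation $\mathrm{sign}(\beta^*)=\mathrm{sign}(R_2-1)$, so the three cases of the theorem correspond to $\beta^*>0$, $\beta^*=0$ and $\beta^*<0$.

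For the cases $R_2\neq1$ the argument is soft. Because $i(t)\to i^*$, the instantaneous growth rate $\beta(t):=\lambda-\mu-\sigma i(t)$ converges to $\beta^*\neq0$; hence there is a time $T$ after which $\beta(t)$ keeps the sign of $\beta^*$ and is bounded away from $0$. Comparing $N$ with the corresponding exponential then yields $N(t)\to\infty$ when $R_2>1$ and $N(t)\to0$ when $R_2<1$. Multiplying by the (positive, bounded) fraction limits gives $(S,E,I,R)\to(\infty,\infty,\infty,\infty)$ in the first case and $(S,E,I,R)\to(0,0,0,0)$ in the third.

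The delicate case is $R_2=1$, where $\beta^*=0$ and the population equation becomes $d(\ln N)/dt=-\sigma(i(t)-i^*)$, so that $N(t)\to N^*:=N(0)\exp\!\left(-\sigma\int_0^\infty(i(s)-i^*)\,ds\right)$ provided the improper integral converges. Mere global attractivity of the endemic equilibrium does not by itself guarantee integrability of $i-i^*$; what is needed is a convergence rate. I would obtain this from the \emph{local} asymptotic stability that accompanies the assumed GAS: the Jacobian of System (\ref{seirs system}) at $(s^*,e^*,i^*)$ has all eigenvalues with strictly negative real part (hyperbolicity), so near the equilibrium convergence is exponential, $|i(t)-i^*|\le Ce^{-\gamma t}$ for some $C,\gamma>0$. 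This makes $i-i^*$ absolutely integrable, forces $0<N^*<\infty$, and yields $(S,E,I,R)\to(s^*N^*,e^*N^*,i^*N^*,r^*N^*)$, a strictly positive finite limit, which is the asserted $(S^*,E^*,I^*,R^*)$.

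The main obstacle is precisely this integrability step in the boundary case $R_2=1$: the statement presupposes a GAS endemic equilibrium, but the conclusion $N\to N^*\in(0,\infty)$ genuinely requires the exponential (or at least summable) decay of $i(t)-i^*$, i.e. hyperbolicity of the endemic equilibrium rather than bare attractivity. I would therefore either fold hyperbolicity into the standing assumption, or verify it directly by checking, via the Routh--Hurwitz conditions applied to the Jacobian at $(s^*,e^*,i^*)$ (with strict inequalities, in the regime $\delta>\rho$, $\nu>\rho$ of \cite{Greenhalgh}), that the endemic equilibrium is exponentially stable whenever it exists. The two remaining cases are robust and require only the qualitative limit $i\to i^*$.
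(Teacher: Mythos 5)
Your proposal follows the same route as the paper's proof: pass to the total population via $dN/dt=(\lambda-\mu-\sigma i(t))N$, use the assumed convergence $i(t)\to i^*$ to identify the asymptotic growth rate $\alpha_2=\lambda-\mu-\sigma i^*$, observe $\mathrm{sign}(\alpha_2)=\mathrm{sign}(R_2-1)$, and recover the compartment limits by multiplying the (positive) fraction limits by the limit of $N$. For $R_2\neq1$ the two arguments coincide and are both sound. Where you go beyond the paper is the boundary case $R_2=1$: the paper simply asserts that ``the results follow,'' whereas you correctly point out that convergence of $N(t)=N(0)\exp\bigl(-\sigma\int_0^t(i(s)-i^*)\,ds\bigr)$ to a finite positive limit requires integrability of $i-i^*$, which bare global attractivity does not supply; you propose to secure it via hyperbolicity (local exponential stability) of the endemic equilibrium, checked by Routh--Hurwitz or folded into the hypothesis. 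This is a genuine refinement that closes a gap the paper leaves open, at the modest cost of a slightly stronger standing assumption in case (ii). One could quibble that even with $0<N^*<\infty$ the constant $N^*$ depends on the whole trajectory (so the limit $(S^*,E^*,I^*,R^*)$ is not an equilibrium determined by the parameters alone), but that is consistent with how the theorem is stated.
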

\begin{proof}
Let us assume that there is an endemic equilibrium $(s^*,e^*,i^*)$ of the fraction System (\ref{seirs system}) and that it is globally asymptotically stable in $\stackrel{o}{D}$. Then,
\[\frac{dN}{dt}\longrightarrow(\lambda-\mu-\sigma i^*)N \textrm{, when }t\longrightarrow\infty.\]
Asymptotically, the population would increase with rate $\lambda$, and decrease with rate $\mu+\sigma i^*$. As the fraction system admits an endemic equilibrium, that is globally asymptotically stable in the interior of the feasible region, all the compartments have the same asymptotic behavior as the population. Let us set $\alpha_2=\lambda-\mu-\sigma i^*$. The quantity $\alpha_2$ is the common asymptotic exponential growth/decay rate of all the compartments S, E, I and R.  We have $sign(\alpha_2)=sign(R_2-1)$. Thus, the results follow.
\end{proof} 
In this section we have studied the corresponding deterministic SEIRS model of the previous stochastic model. We derived a threshold quantity $R_1$ for the fraction model. If $R_1\leq1$, then the \textit{fraction}'s disease free equilibrium is globally asymptotically stable in the feasible region $D$, otherwise it is unstable. When $R_1\leq1$, the behavior of the \textit{number} of infected is determined by the basic reproduction number $R_0$. If $R_0<1$, then the \textit{number} of infected vanishes. If $R_0=1$, then the \textit{number} of infected stabilizes to a positive value; when $R_1\leq1<R_0$, then the \textit{number} of infected grows exponentially, but at a lower rate than the population.  If $R_1>1$, then the \textit{number} of infected grows initially quicker than the population and the asymptotic behavior of the population is governed by the threshold quantity $R_2$. If $R_2<1$, then the population vanishes; if $R_2=1$, then the population stabilizes; if $R_2>1$, then the population grows, but with a lower rate than its initial growth rate.
\section{Simulations}\label{Simulations}
In this section, we use the software R  to illustrate and confirm the results found in the previous sections. In the following, we set $\mu=1$, that is the time unit is the life expectancy, except for the simulations of influenza epidemics in Burkina Faso where we set one year as the time unit. The other parameters and the initial values are chosen arbitrary, unless otherwise stated. 
\subsection{Simulations of the initial phase}
In this subsection, we give some examples of simulations of  epidemics starting by one latent individual, and using different values of the parameters. The population is initiated with $1$ latent individual and $999$ susceptible individuals. 
 
In Figure \ref{SEIRSsimR0leq1} (a) and (b), where $R_0=0.41$ and $R_0=0.73$  respectively, all the $10$ epidemics die without any major outbreak. In (a) the maximum of infected individuals is $2$, while it is $6$ in (b).
\begin{figure}[!h]
\centering
\subfigure[$R_0=0.414$]{
\includegraphics[scale=0.3]{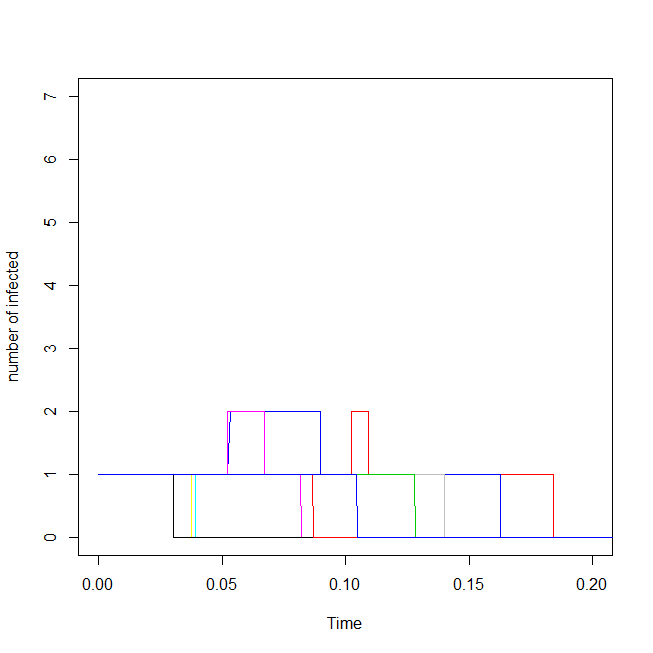}
}
\quad
\subfigure[$R_0=0.733$]{
\includegraphics[scale=0.3]{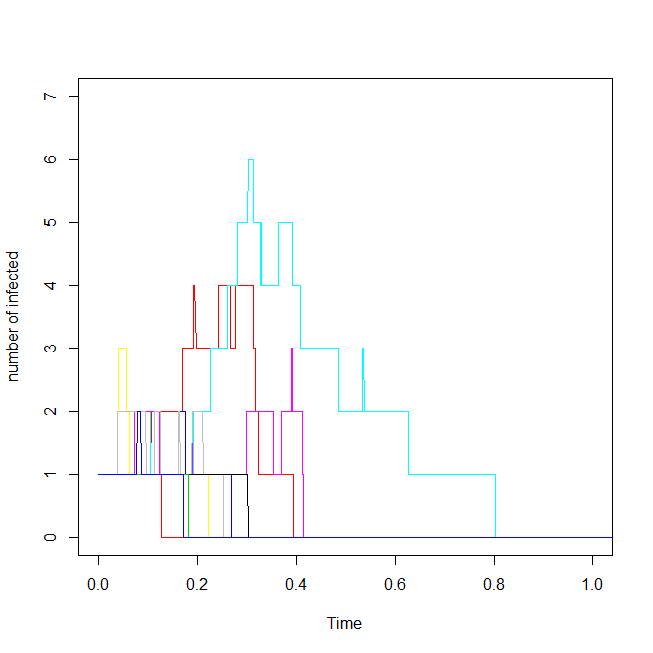}
}
\caption{ (a)  10 SEIRS simulations with $\lambda=3, \mu=1, \sigma=7, \delta=15, \kappa=10, \nu=20, \rho=5$, that gives$ R_0=0.41, \alpha=-7.82, \pi=1$. (b) 10 SEIRS simulations with  $\lambda=1.2, \mu=1, \sigma=7, \delta=5, \kappa=10, \nu=20, \rho=5$, that gives $R_0=0.73, \alpha=-2.30, \pi=1$. In both cases, all the $10$ epidemics die out without a major outbreak. However, they die out quicker and the number of infected is fewer in (a) than in (b).}\label{SEIRSsimR0leq1}
\end{figure}

In Figure \ref{SEIRSsimR0g1}, where $R_0=2$, four simulated epidemics out of $10$ have a major outbreak. The other $6$ epidemics die out without many getting infected. For the epidemics with major outbreak, the \textit{number} of the infected individuals grow exponentially but the time where the exponential growth starts varies. 
\begin{figure}[!h]
\centering
\subfigure[]{
\includegraphics[scale=0.3]{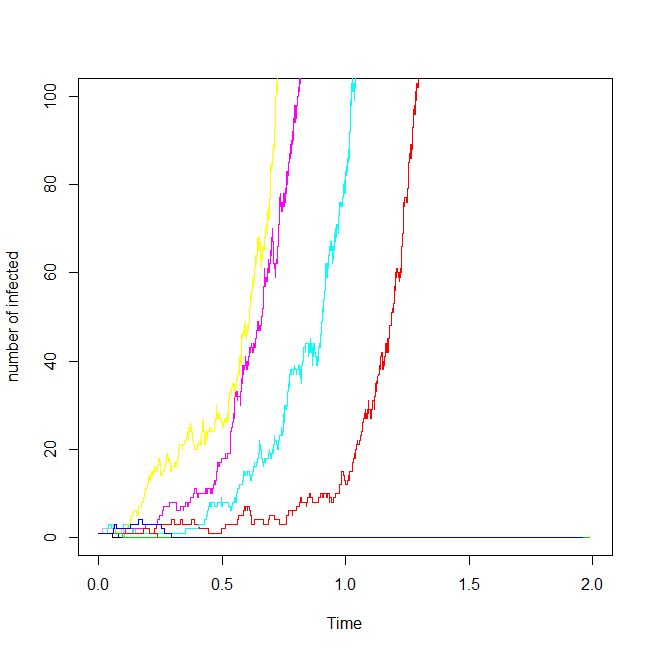}
}
\quad
\subfigure[]{
\includegraphics[scale=0.3]{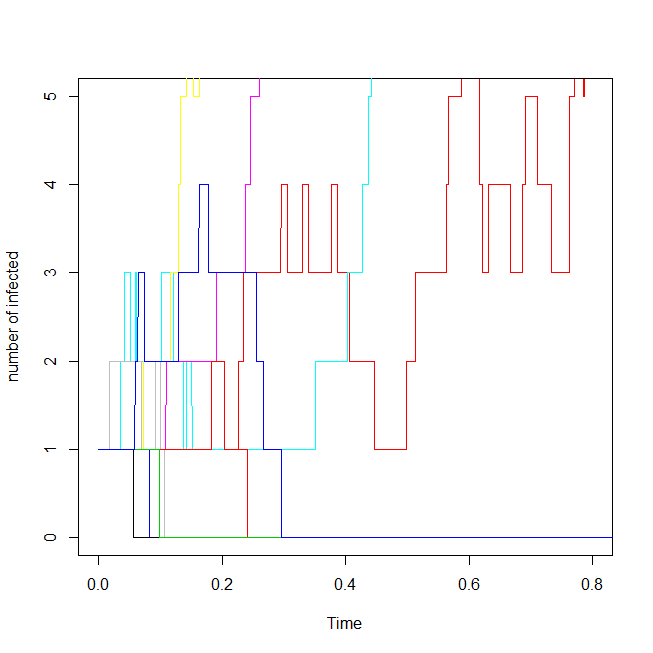}
}
\caption{10 SEIRS simulations ($6$ dying quickly)  with  $\lambda=3, \mu=1, \sigma=4, \delta=5, \kappa=21, \nu=20, \rho=5, n=1000, R_0=2, R_1=1.52, \alpha=5.72, \pi=0.52 $. In (b) we made a zoom so that the six minor epidemics can be seen.}\label{SEIRSsimR0g1}
\end{figure}

Now we estimate the probability of a minor outbreak $\pi$ by simulating $1000$ epidemics and setting $\hat{\pi}_n=n_0/1000$, where $n_0$ is the number of minor epidemics. We set $\lambda=3, \mu=1, \nu=50, \delta=10, \sigma=4, \rho=3$, and set successively $\kappa=10, 20, 30, 50, 100$ to get different values of $\pi$. Table \ref {minout} gives the different values of  $\pi$ and  the estimate $\hat{\pi}_n$ for $n=1000$ and $n=2000$ respectively. These results confirm that the probability of extinction of the branching process $L_{\infty}$ is a good approximation of the probability of a minor outbreak of the epidemic starting with one latent individual in a population of size $n$, when $n$ is large.
\begin{table}[h]
\begin{centering}
\begin{tabular}{|l|l|l|l|l|l|l|l|l|l|l|}
\hline
$n$&$1000$&$2000$&$1000$&$2000$&$1000$&$2000$&1000&$2000$&$1000$&$2000$\\ \hline
$\kappa$&$10$&$10$&$20$&$20$&$30$&$30$&$50$&$50$&$100$&$100$\\ \hline
$R_0$&$0.654$&$0.654$&$1.307$&$1.307$&$1.961$&$1.961$&$3.268$&$3.268$&$6.536$&$6.536$\\ \hline
$\pi$&1.000&$1.000$&$0.770$&$0.770$&$0.520$&$0.520$&$0.320$&$0.320$&$0.170$&$0.170$\\ \hline
$\hat{\pi}_n$&$1.000$&$1.000$&$0.788$&$0.782$&$0.513$&$0.520$&$0.303$&$0.329$&$0.165$&$0.172$\\ \hline
\end{tabular}
\caption{Estimation of the probability of minor outbreak $\pi_n$ of the epidemic starting with one latent individual and $n-1$ susceptible individuals. The theoretical result is $\pi$ and the simulated result is $\hat{\pi}_n$ from $1000$ simulations.}\label{minout}
\end{centering}
\end{table}
  
These simulations confirm that when $R_0$ is less than $1$, the disease cannot invade the population, and that if $R_0$ is larger than $1$, then with a positive probability $(1-\pi_n)\longrightarrow(1-\pi)$, the disease can invade the population, and in this case the \textit{number} of the infected individuals grows exponentially during the initial phase. 
\subsection{ Simulations of major outbreaks}
In this subsection, we show some  simulations of major outbreaks, where the epidemic starts with a positive \textit{fraction} of infected individuals. So the simulations illustrate what happens once the \textit{number} of infected has reached a small but positive \textit{fraction} of the community. We use the blue color for the susceptible, green for the exposed, red for the infectious and black for the recovered. 

We start by simulating the deterministic \textit{fraction}'s system. In Figure \ref{seirs1}, we have four cases with $R_1=0.5, 1, 1.76, 5.33$ respectively. In each case we have ten solutions paths of System (\ref{seirsf system}) with different initial values. In (a) as in  (b), the $10$ solutions of System (\ref{seirsf system}) approach the disease free equilibrium, confirming that if $R_1\leq1$, then the disease free equilibrium of the deterministic fraction system is globally asymptotically stable in the feasible region. In (c), all the $10$ solutions  approach the same endemic equilibrium $(s^*, e^*, i^*, r^*)\approx(0.50,0.13,0.14, 0.24)$. In (d), all the $10$ solutions  approach the same endemic equilibrium $(s^*, e^*, i^*, r^*)\approx(0.12, 0.49, 0.31, 0.08)$. The results of (c) and (d) confirm that when $R_1>1$, the disease free equilibrium is unstable and that there is an endemic equilibrium that is globally asymptotically stable in the interior of the feasible region. That is the endemic level is independent of the starting value. However the endemic equilibrium of (d) is different of that of (c), hence the endemic equilibrium vary with the parameters values.
\begin{figure}[!h]
\centering
\subfigure[$R_1=0.5$]{
\includegraphics[scale=0.3]{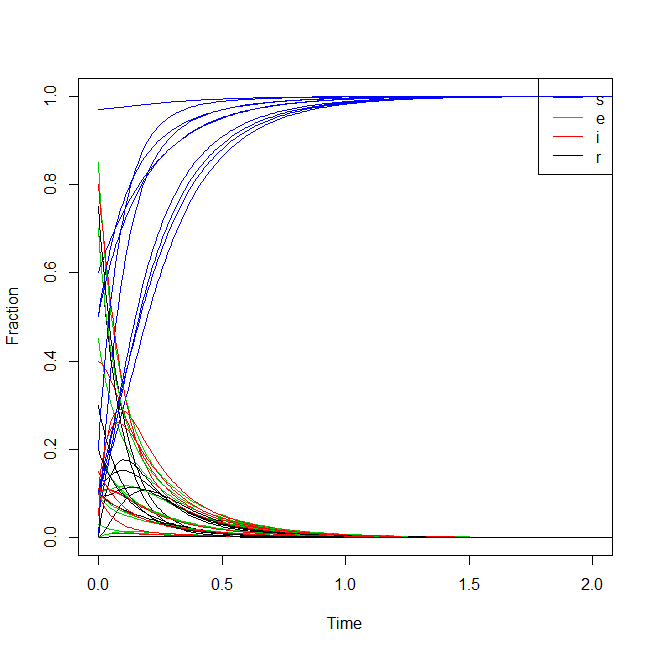}
}
\quad
\subfigure[$R_1=1$]{
\includegraphics[scale=0.3]{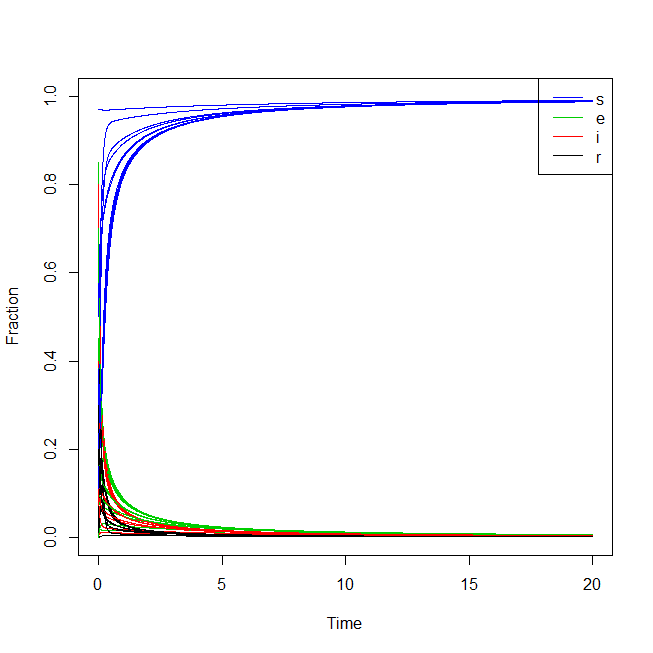}
}
\quad
\subfigure[$R_1=1.76$]{
\includegraphics[scale=0.3]{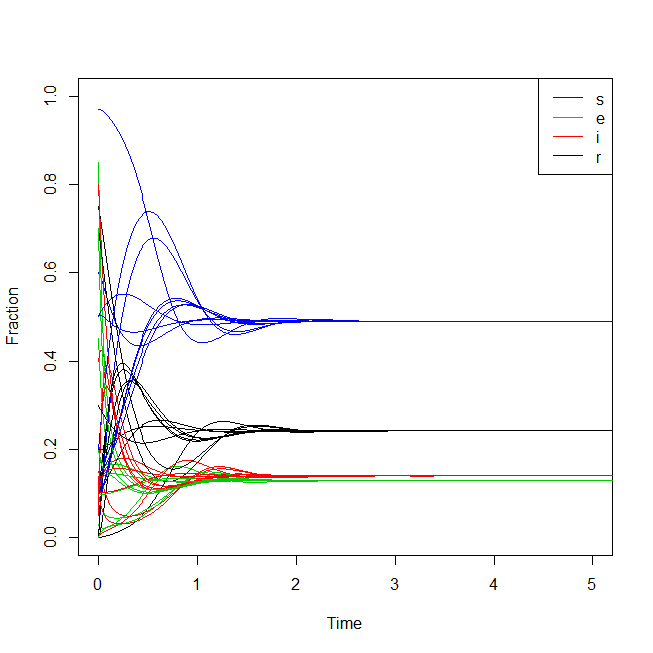}
}
\quad
\subfigure[$R_1=5.33$]{
\includegraphics[scale=0.3]{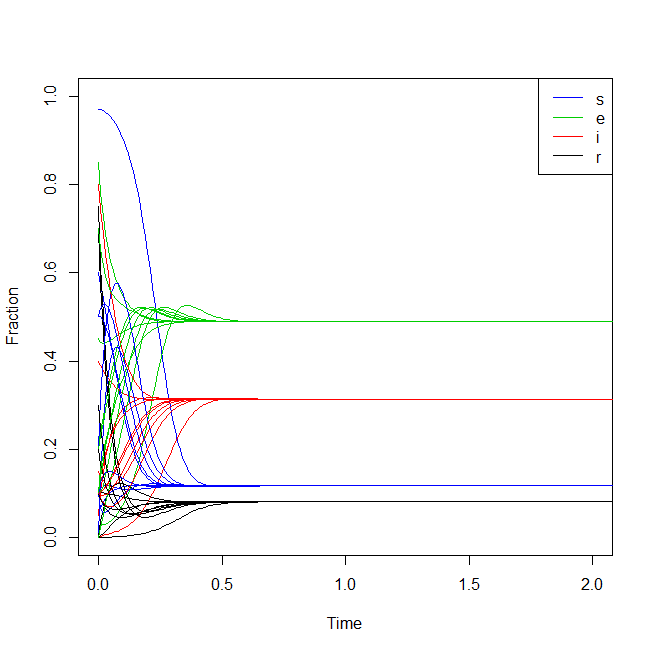}
}
\caption{Simulations of of System (\ref{seirsf system}). In each case we have 10 solutions paths of System (\ref{seirsf system}) with different initial values. For (a)  we have $\lambda=2, \mu=1, \sigma=8, \delta=5, \kappa=9, \nu=10, \rho=10$ that gives $R_1=0.5$. For (b) the parameters have the same values as in (a) except that we set $\kappa=18$ to get $R_1=1$. In (a) and in (b) all the solutions approach the disease free equilibrium $(1,0,0,0)$. The time scale is longer in (b), so the epidemic takes longer time to die out when $R_1$ is close to $1$.
 For (c) we have  $\lambda=2, \mu=1, \sigma=8, \delta=5, \kappa=30, \nu=15, \rho=2$, that gives $ R_1=1.76$; all the solutions approach the same endemic equilibrium $(s^*, e^*, i^*, r^*)\approx(0.50,0.13,0.14, 0.24)$. For (d), we have $\lambda=2, \mu=1, \sigma=8, \delta=5, \kappa=100, \nu=8, \rho=20$ that gives $R_1=5.33$; all the solutions approach the same endemic equilibrium $(s^*, e^*, i^*, r^*)\approx(0.12, 0.49, 0.31, 0.08)$.} \label{seirs1}
\end{figure}

In the following we simulate both the stochastic and the deterministic models. In each case, we have simulated the stochastic epidemic, as well as integrated numerically the  deterministic system (\ref{SEIRSsystem}), both starting at the same values. From the \textit{numbers}, we got the \textit{fractions}  by setting $s=S/N, e=E/N, i=I/N, r=R/N, \textrm{ with }N=S+E+I+R$. One distinguishes the stochastic solutions from the deterministic by the fact that the deterministic solutions are represented by smooth lines, while the solutions of the stochastic solutions are represented by broken lines.

In Figure \ref{SEIRSsd4}, where $R_0=1.27>1$ and $R_1=0.95<1$, the \textit{numbers} of latent, infectious and recovered grow exponentially as the population, but the population growth rate is even larger and the \textit{fractions} of the infected compartments go to \textit{zero}. This means that, in term of the \textit{number} of infected individuals, the disease is endemic, but the disease dies out in term of the \textit{fractions}. 
\begin{figure}[!h]
\vspace{-0.3cm}
\centering
\subfigure[\textit{numbers}]{
\includegraphics[scale=0.3]{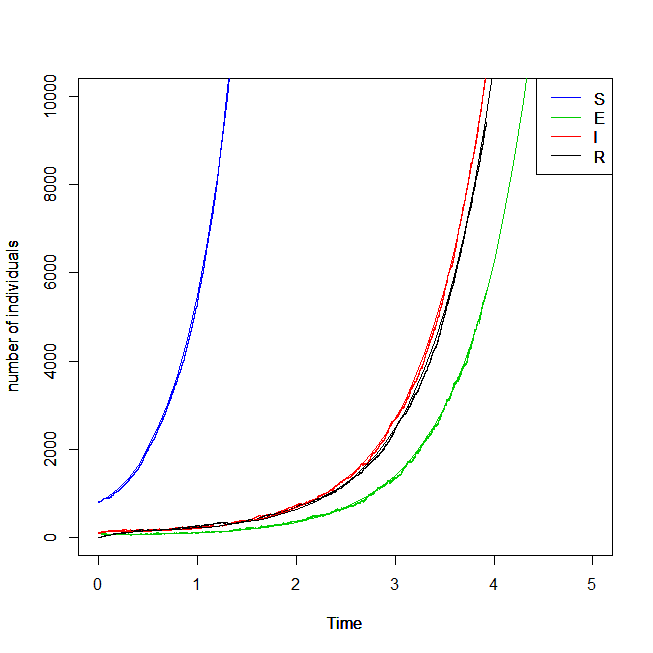}
}
\quad
\subfigure[\textit{Fractions}]{
\includegraphics[scale=0.3]{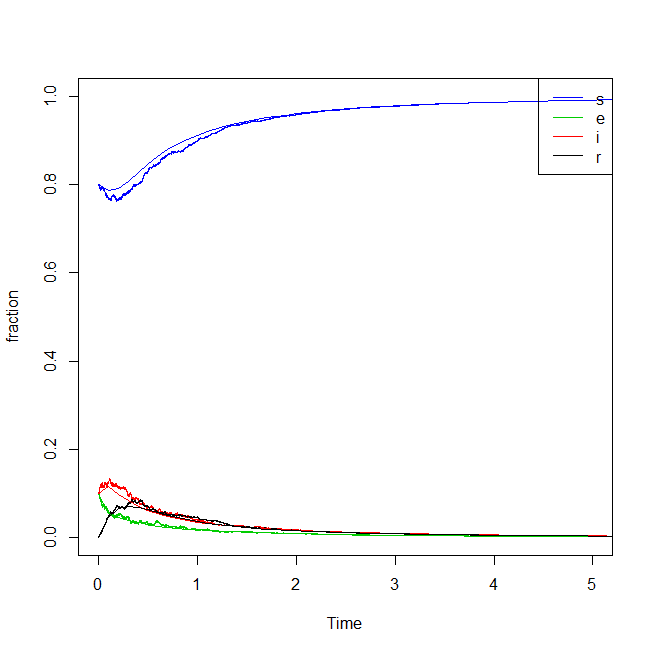}
}
\caption{SEIRS curves  with  $\lambda=3, \mu=1, \sigma=3, \delta=5, \kappa=12, \nu=20, \rho=3$ that gives $R_0=1.27, R_1=0.95, \pi=0.80, \alpha=1.61$ The initial values are $N(0)=1000$ with $(S(0),E(0),I(0),R(0))=(800,100,100,0)$. All the numbers grow exponentially. But the population grow faster, and the fractions go to the disease free equilibrium. One distinguishes hardly the deterministic curves from the stochastic because they are very close.} \label{SEIRSsd4}
\end{figure}

In Figure \ref{SEIRSsd1}, initially the \textit{number} of the infected grows quicker than the \textit{number} of the susceptible. After that, the \textit{number} of the susceptible declines. Afterward all the compartments grow exponentially, but with a rate $\alpha_2\approx0.16$  that is lower than the initial growth rate of the population $\lambda-\mu=1$.  The \textit{fractions} go to an endemic equilibrium. 
\begin{figure}[!h]
\centering
\subfigure[Numbers]{
\includegraphics[scale=0.3]{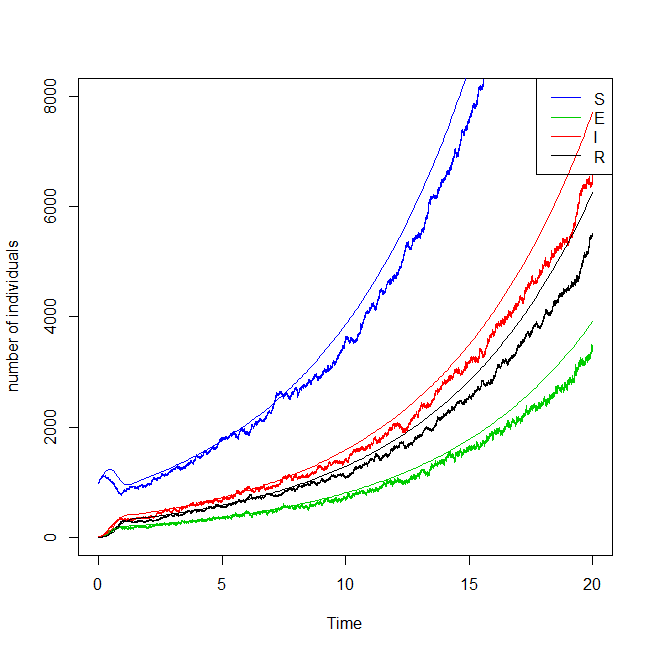}
}
\quad
\subfigure[\textit{Fractions}]{
\includegraphics[scale=0.3]{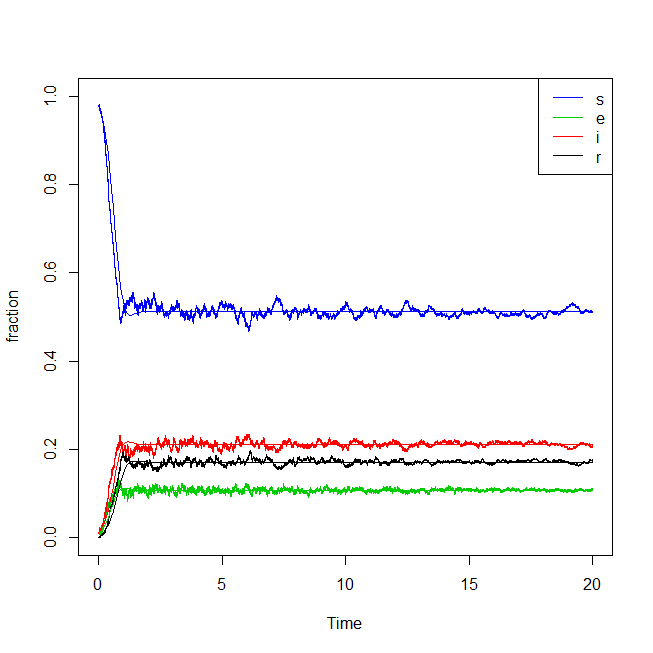}
}
\caption{SEIRS curves  with  $\lambda=2, \mu=1, \sigma=4, \delta=5, \kappa=21, \nu=20, \rho=5$ that gives $R_0=2, R_1=1.74, \pi=0.52, \alpha=5.72$. The initial values are  $(S(0),E(0),I(0),R(0))=(980,10,10,0)$. All the \emph{numbers} grow exponentially with rate $\alpha_2\approx0.16$, while the \emph{fractions} go to an endemic equilibrium $(s^*,e^*,i^*,r^*)\approx(0.51, 0.11, 0.21, 0.17)$.} \label{SEIRSsd1}
\end{figure}

For Figure \ref{SEIRSsd5}, the parameters are chosen such that Equation (\ref{equation1}) is verified, allowing then the existence of endemic equilibrium for the deterministic System \ref{SEIRSsystem}. The asymptotic reproduction \textit{number} of the population $R_2=1$, then for the  deterministic solution, the population stabilizes, when $t\longrightarrow\infty$. The stochastic \textit{numbers} fluctuate around the deterministic \textit{numbers}. The \textit{fractions} have the same behavior as that of the \textit{numbers}. 
\begin{figure}[!h]
\centering
\subfigure[\textit{Numbers}]{
\includegraphics[scale=0.3]{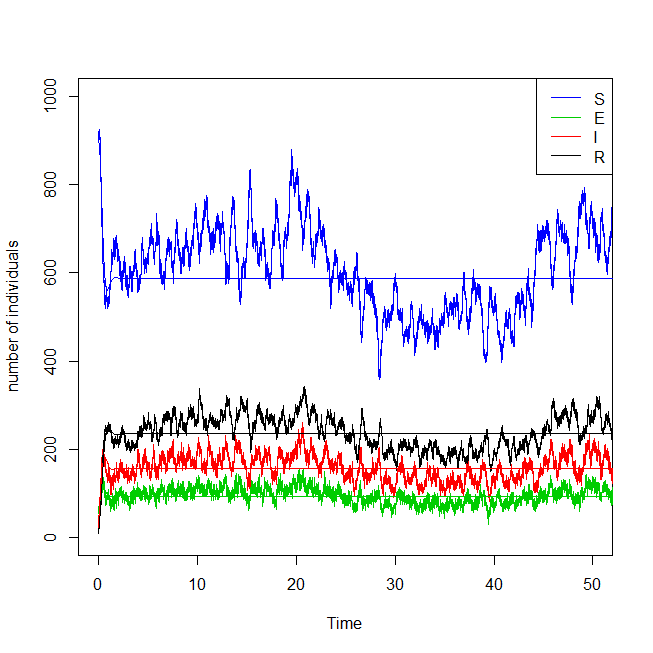}
}
\quad
\subfigure[\textit{Fractions}]{
\includegraphics[scale=0.3]{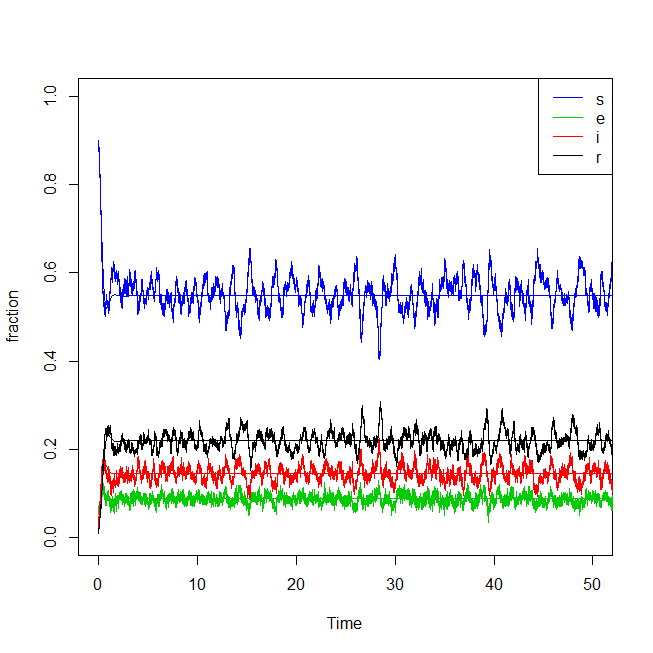}
}
\caption{SEIRS curves  with  $\lambda=1.73, \mu=1, \sigma=5, \delta=6, \kappa\approx 23.02, \nu=20, \rho=3$ that gives $R_0=1.83, R_1=1.66, \alpha=5.42$, The initial values are  $(S(0),E(0),I(0),R(0))=(900,70,20,10)$. The asymptotic growth rate of the population is $\alpha_2=0$, its asymptotic reproduction number $R_2=1$. Thus for the deterministic solutions the \emph{numbers} approach an endemic equilibrium $(S^*,E^*,I^*,R^*)\approx(581,  94, 157, 235)$ and the \emph{fractions} go to an endemic equilibrium $(s^*,e^*,i^*,r^*)\approx(0.55, 0.09, 0.15, 0.22)$. The stochastic solutions fluctuate around the deterministic.} \label{SEIRSsd5}
\end{figure}

In Figure \ref{SEIRSsd3}, for the deterministic model, the epidemic turned the population exponential growth to an exponential decay due to the disease induced death rate, while the \textit{fractions} go to an endemic equilibrium. For the stochastic model, the population size first decreases but at some instance when there are only few remaining individuals, the disease goes extinct and then  the population size starts growing again. The deterministic model suggests that the population will go extinct, whereas in the stochastic model the disease first dies out and then the population becomes super critical again, thus regrowing. In the stochastic setting, what happens when the numbers become low is random. Both the disease and the population could die out, or the population can grow again after the extinction of the epidemic. Only analyzing the deterministic \textit{fraction} model would give a misleading conclusion since the \textit{fractions} seem to stabilize, whereas in what really happens is that all \textit{numbers} in the deterministic model tend to $0$.
\begin{figure}[!h]
\centering
\subfigure[\textit{Numbers}]{
\includegraphics[scale=0.3]{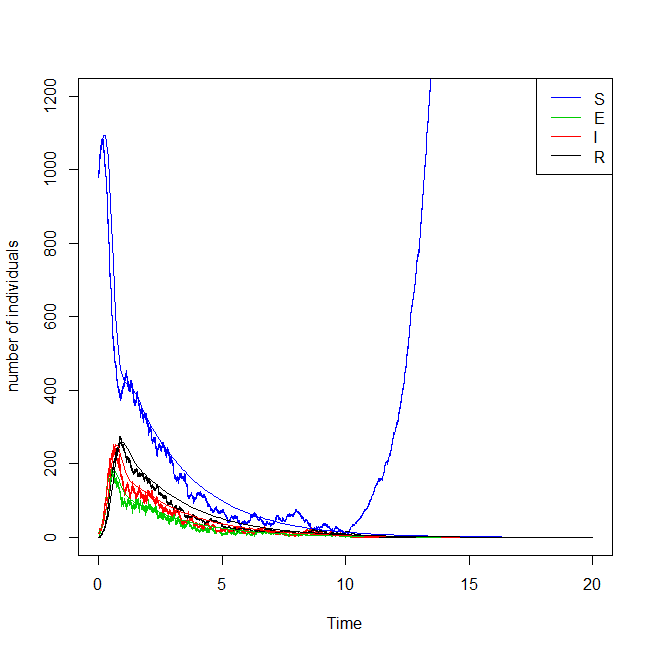}
}
\quad
\subfigure[\textit{Fractions}]{
\includegraphics[scale=0.3]{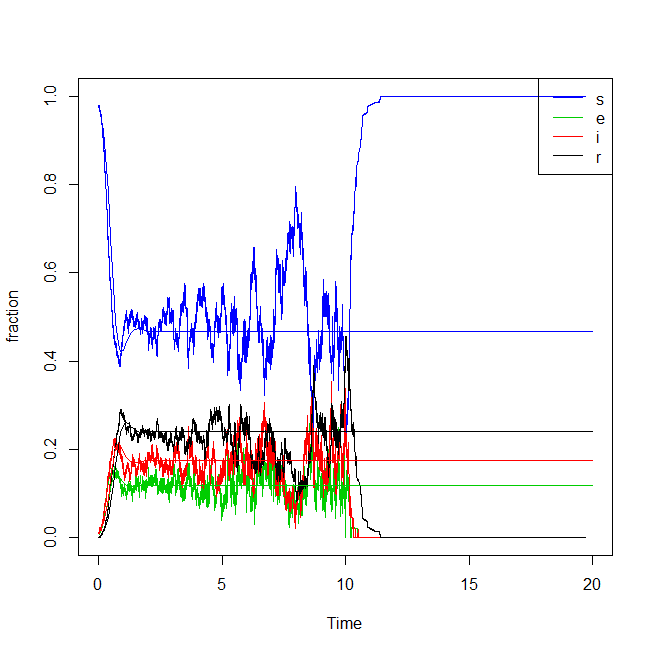}
}
\caption{SEIRS curves  with  $\lambda=2, \mu=1, \sigma=8, \delta=5, \kappa=30, \nu=20, \rho=3$ that gives $R_0=2.04, R_1=1.81, \pi=0.51, \alpha=7.24$, the initial values are $(S(0),E(0),I(0),R(0))=(980,10,10,0)$. The asymptotic decay rate of the population is $\alpha_2\approx-0.39$, its asymptotic reproduction number $R_2\approx0.84$. $R_1>1$ and $R_2<1$ thus, in the deterministic solutions, the \emph{numbers} vanish, while the \emph{fractions} go to an endemic equilibrium $(s^*,e^*,i^*,r^*)\approx(0.48, 0.12, 0.17, 0.24)$. In the stochastic solution, all the numbers vanish except the number of the susceptible that decreases until the extinction of the disease and regrow exponentially after that.} \label{SEIRSsd3}
\end{figure}

In Figure \ref{SEIRSsd6}, we have the case where $R_0=1$. For the stochastic model the disease goes extinct, while it persists in the deterministic one. In both models the population goes on growing exponentially. In (b) we made a zoom to see the dynamics of $E, I$ and $R$. 
\begin{figure}[!h]
\centering
\subfigure[\textit{Numbers $R_0=1$}]{
\includegraphics[scale=0.3]{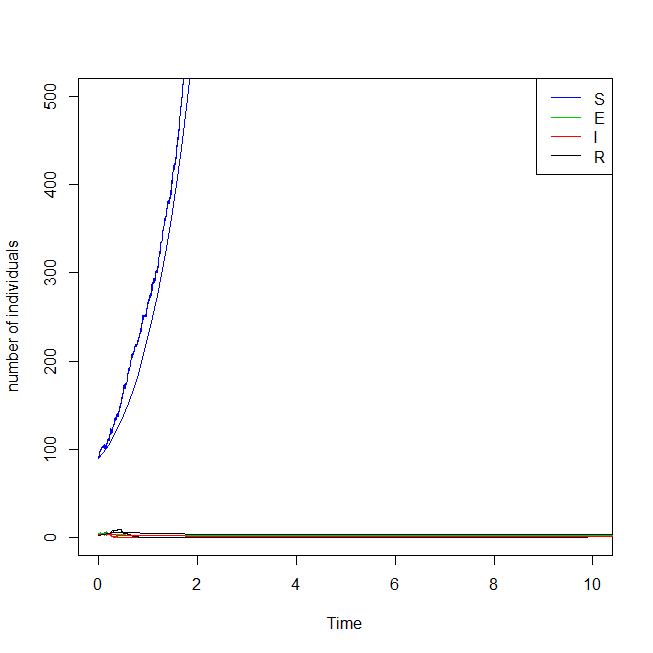}
}
\quad
\subfigure[\textit{Numbers $R_0=1$ (Zoom-in)}]{
\includegraphics[scale=0.3]{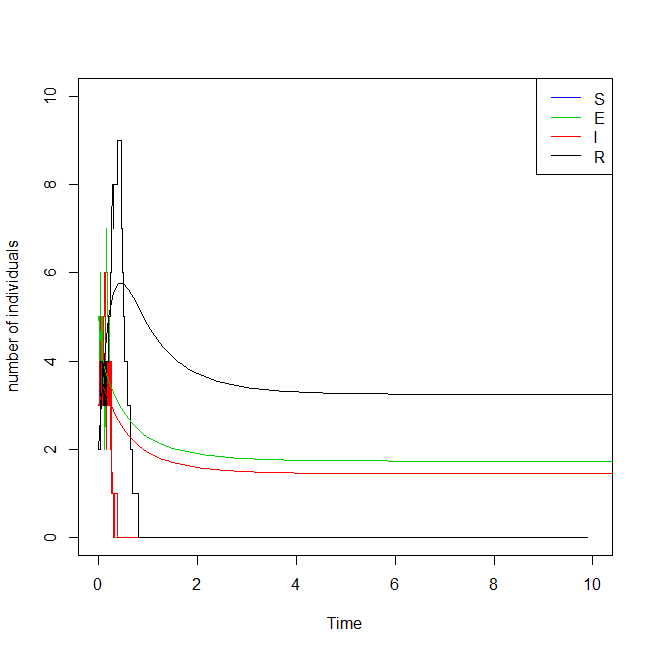}
}
\caption{SEIRS curves  with  $\lambda=2, \mu=1, \sigma=8, \delta=9, \kappa=19.2, \nu=15, \rho=3$ that gives $R_0=1$, the initial values are  $(S(0),E(0),I(0),R(0))=(90,5,3,2)$. The population grows exponentially in the stochastic and in the deterministic model. In the stochastic model the latent, the infectious and the recovered vanish, while in the deterministic model, they stabilize to positive values $(E^*, I^*, R^*)\approx(1.73, 1.44 ,3.24)$. In (b) the scale is chosen to show the dynamics of $E, I$ and $R$; $S(t)$ is much larger and not shown. } \label{SEIRSsd6}
\end{figure}

These simulations show the different possible scenarios in the case of a major outbreak. They confirm the theoretical results and show the similarities and differences between the stochastic model and the deterministic model. 
\subsection{Simulation of influenza epidemics in Burkina Faso}
Now we simulate two influenza epidemics in Burkina Faso with two different basic reproduction numbers.  Burkina Faso is an inland country of West Africa. Its population in $2016$ is about $16~000~000$. The individual annual birth rate and  death rate are estimated to $\lambda=0.046$ and $\mu=0.0118$ respectively \cite{Insd}. 

According to the World Health Organization (WHO)  \cite{WHO}, influenza is caused by a virus that attacks mainly the respiratory tract: the nose, the throat, the bronchi and rarely also the lungs. The infection usually lasts for about a week. It is characterized by sudden onset of high fever, myalgia, headache and severe malaise, non-productive cough, sore throat, and rhinitis. Most people recover within one to two weeks without requiring any medical treatment. The virus is easily passed from person to person through the air by droplets and small particles excreted when infected individuals cough or sneeze. The influenza virus enters the body through the nose or the throat. It then takes between one and four days for the person to develop symptoms. Someone suffering from influenza can be infectious from the day before he/she develops symptoms until seven days afterwards. The disease spreads very quickly among the population especially in crowded circumstances. Cold and dry weather enables the virus to survive longer outside the body than in other conditions and, as a consequence, seasonal epidemics in temperate areas appear in winter. Much less is known about the impact of influenza in the developing world. However, influenza outbreaks in the tropics where viral transmission normally continues year-round tend to have high attack and case-fatality rates. Therefore by setting one year as the time unit, we make the following estimation for the influenza parameters. The latent period is approximately $2.5$ days, the infectious period is approximately $7$ days and the immunity period is approximately one year, thus  $\nu=365/2.5$, $\delta=365/7$, $\rho=1$. We set the influenza case fatality rate (CFR) to $0.1\%$ and deduce the  influenza related death rate $\sigma\approx0.0522$. 

The reproduction number of the $1918$ pandemic influenza is estimated to be between $2$ and $3$ \cite{Mills}. Thus we set $R_0=2.5$ and deduce the contact number $\kappa=130$ from the other parameters and Equation (\ref{R0}). We simulate the epidemic for a period of $10$ years starting in $2016$ by integrating numerically the deterministic System (\ref{SEIRSsystem}). Figure \ref{Burkina} gives the evolution of the \textit{numbers} of susceptible, latent, infectious and recovered individuals and the corresponding fractions during the $10$ years. Figures \ref{Burkina} (c) and  (d), show respectively the dynamics of $E$ and $I$ and that of the \textit{fractions} $e$ and $i$. We have a peak with  $2~783~834$ infectious individuals at the $11^{th}$ week of the epidemic. After that the \textit{number} of infectious declines because the \textit{number} of susceptible is low. Afterward we have a minor peak every year due to the immunity waning and the newborns that increase the \textit{number} of susceptible. The \textit{number} of recovered individuals grow quickly and reach its maximum $13~139~592$ at the $17^{th}$ week. The \textit{fractions} go to an endemic equilibrium through damped oscillations. The population in $2026$ is estimated to $22~370~000$ individuals with $8~960~000$ susceptible, $94~000$ latent individuals, $260~000$ infectious individuals and $13~060~000$ recovered with non-permanent immunity.  The number of recovered individuals is larger than the \textit{number} of susceptible individuals. The \textit{fractions} go to an endemic equilibrium with more than $1\%$ of the population infected at every time. As the influenza last about one week and we have $52$ weeks within a year, more than $50\%$ of the population should be infected during the year $2026$. That will have a very important negative impact on the economy of the country. We have $R_1\approx2.50$ and $R_2\approx3.71$, thus $R_1$ and $R_2$ are both larger than $1$ and according to Theorem \ref{R2} the \emph{fractions} should go to an endemic equilibrium and all the compartments should grow exponentially. Therefore the simulations agree with the theoretical results. 
\begin{figure}[!h]
\centering
\subfigure[\textit{Numbers}]{
\includegraphics[scale=0.3]{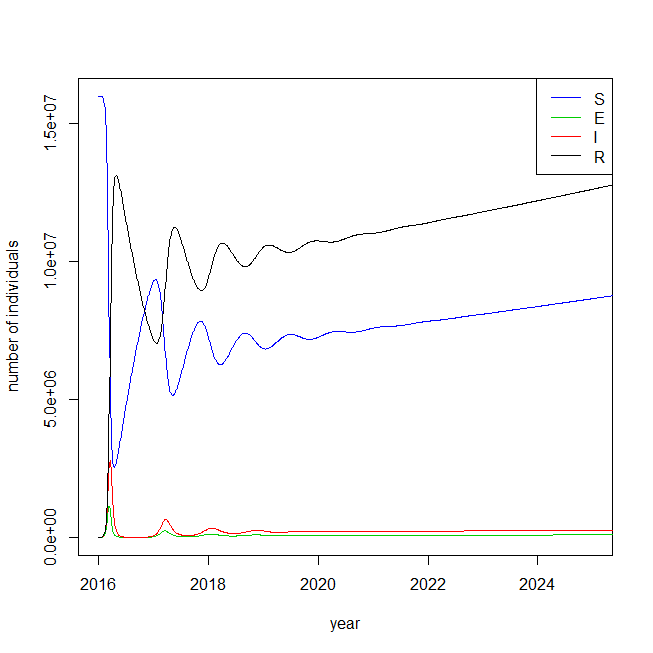}
}
\quad
\subfigure[\textit{Fractions}]{
\includegraphics[scale=0.3]{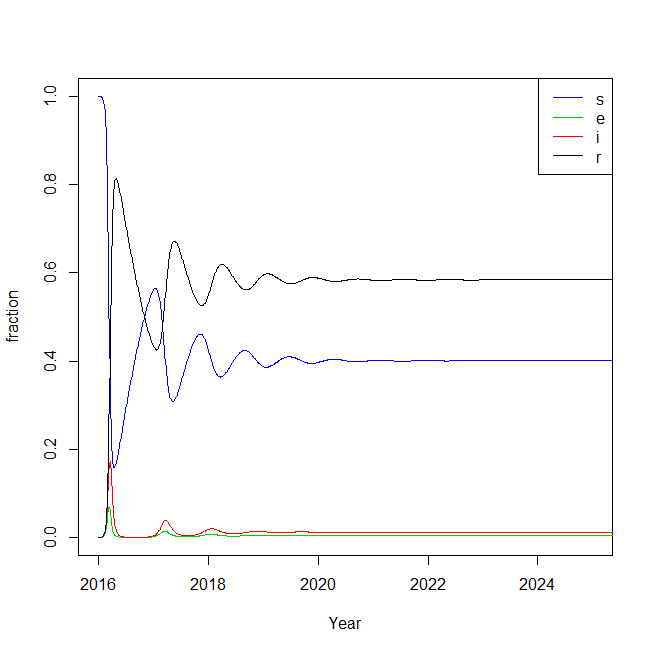}
}
\quad
\subfigure[\textit{Numbers of infected}]{
\includegraphics[scale=0.3]{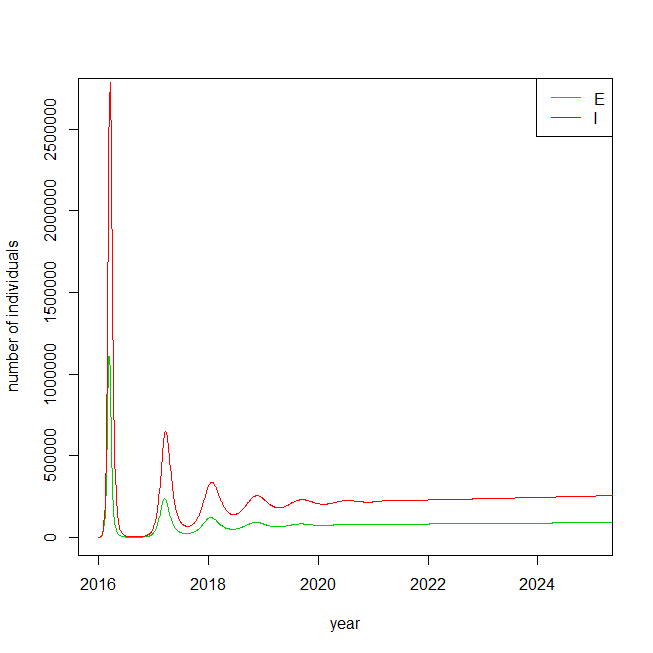}
}
\quad
\subfigure[\textit{Fractions of infected}]{
\includegraphics[scale=0.3]{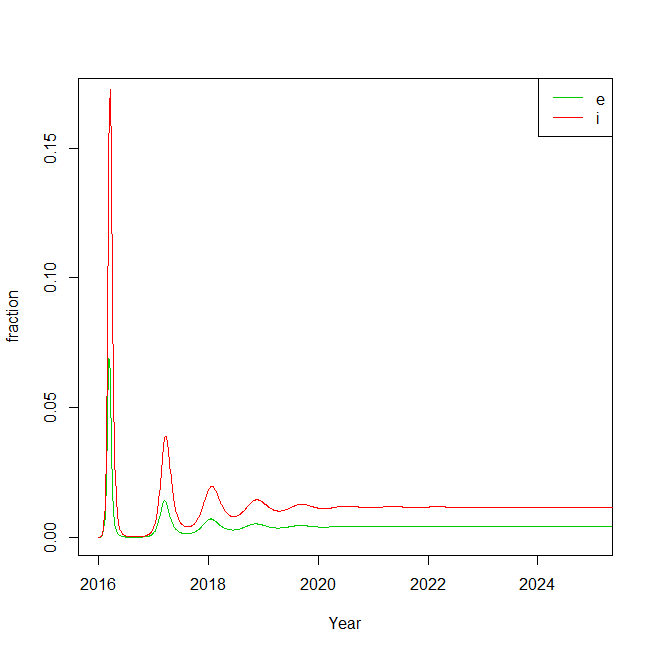}
}
\caption{Simulation of influenza in Burkina Faso with $R_0=2.5$. The parameters values are  $\lambda=0.046, \mu=0.0118, \sigma=0.0522, \delta=365/7, \kappa= 130.5277, \nu=365/2.5, \rho=1$ that gives $R_0=2.5$, $R_1=2.499$. The initial values are  $(S(0),E(0),I(0),R(0))=(16~000~000,1~000,400,10)$. By damped oscillations, the fractions approach an endemic equilibrium  $(s^*,e^*,i^*,r^*)\approx(0.400, 0.004, 0.012, 0.584)$. The initial growth rate of the population is $\lambda-\mu=0.0342$, with the epidemic its asymptotic growth rate  is $\alpha_2\approx0.0336$ and its asymptotic reproduction number rate is $R_2\approx3.707$. (c) show the dynamics of $E$ and $I$; (d) show the dynamics of $e$ and $i$.} \label{Burkina}
\end{figure}

The basic reproduction number for the novel influenza A (H1N1) has been estimated to be between $1.4$ and $1.6$ \cite{Coburn}. Thus, we set now $R_0=1.5$ and deduce the contact number from the other parameters. The results of this simulation are shown in  Figure \ref{Burkina1}. In this case the epidemic and the population have globally the same dynamics as above. But the impact of the epidemic is fewer. The major peak of the epidemic happens later, at the $25^{th}$ week, with $794~004$ infectious individuals. Contrary to the preceding case, the number of the recovered individuals is below the number of the susceptible individuals. 
\begin{figure}[!h]
\centering
\subfigure[\textit{Numbers}]{
\includegraphics[scale=0.3]{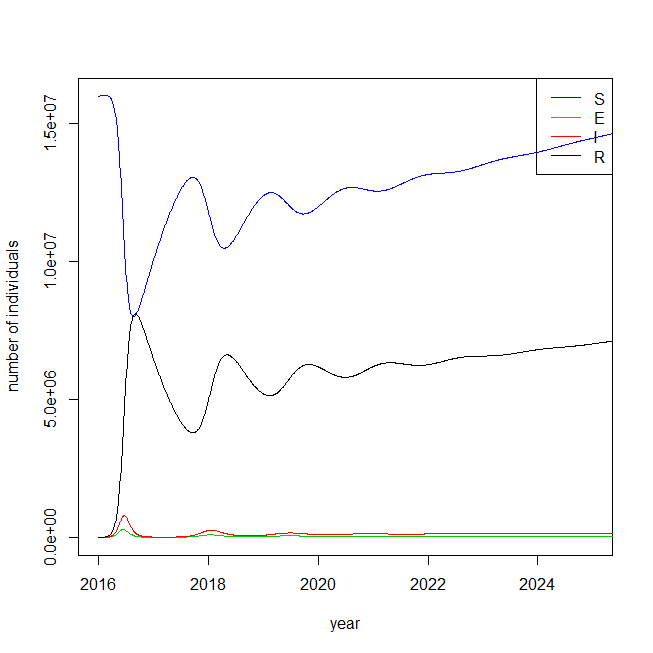}
}
\quad
\subfigure[\textit{Fractions}]{
\includegraphics[scale=0.3]{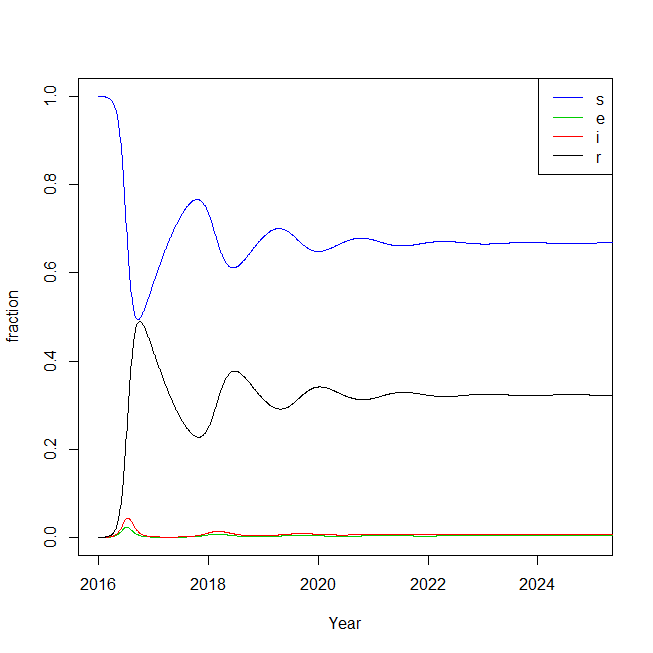}
}
\quad
\subfigure[\textit{Numbers}]{
\includegraphics[scale=0.3]{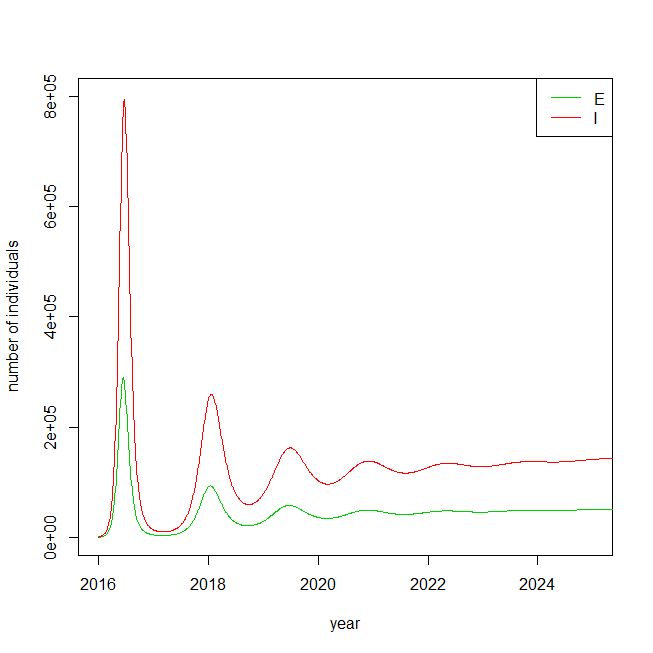}
}
\quad
\subfigure[\textit{Fractions}]{
\includegraphics[scale=0.3]{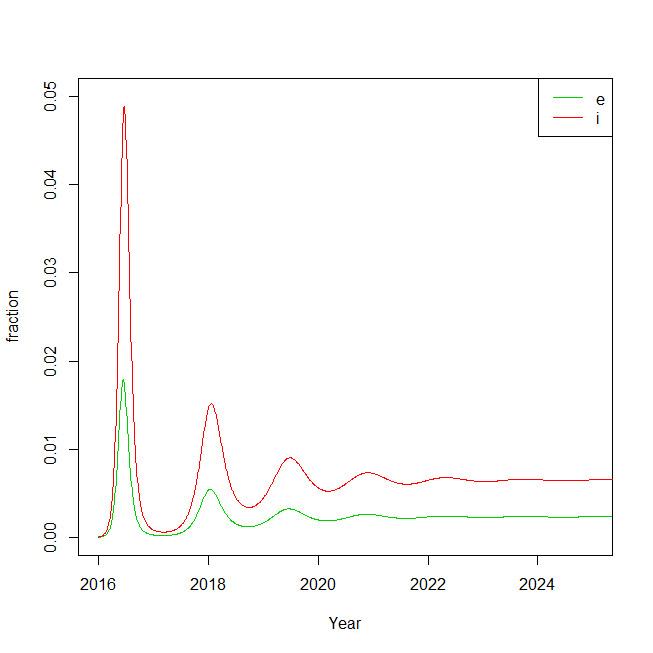}
}
\caption{Simulation of influenza in Burkina Faso with $R_0=1.5$. The parameters values are  $\lambda=0.046, \mu=0.0118, \sigma=0.0522, \delta=365/7, \kappa=78.3166, \nu=365/2.5, \rho=1$ that gives $R_0=1.5$, $R_1=1.499$. The initial values are  $(S(0),E(0),I(0),R(0))=(16~000~000,1~000,400,10)$. By damped oscillations, the fractions approach an endemic equilibrium  $(s^*,e^*,i^*,r^*)\approx(0.667, 0.002, 0.006, 0.324)$. The initial growth rate of the population is $\lambda-\mu=0.0342$, with the epidemic its asymptotic growth rate  is $\alpha_2\approx0.0339$ and its asymptotic reproduction number is $R_2\approx3.79$. (c) show the dynamics of $E$ and $I$; (d) show the dynamics of $e$ and $i$.} \label{Burkina1}
\end{figure}

In \cite{Coburn}, Coburn, Wagner and Blower simulated an influenza epidemic using a SIR model with demography. Their result for the \textit{number} of infectious individuals \cite[Figure 2 (a)]{Coburn} is similar to that of Figure \ref{Burkina} (c) and Figure \ref{Burkina1} (c). We assume no seasonal effects. Adding seasonality should make seasonal effects remain \cite[Figure 2 (b)]{Coburn}.
The simulations of influenza in Burkina Faso show that in spite of the epidemic the population should go on growing. But the number of infected individuals will grow also.  Furthermore the peak of the epidemic in the first year show that the emergence of a new strain of influenza virus will be a very serious threat for the world. The Global Influenza Program (GIP) of the World Health Organization (WHO), provides Member States with strategic guidance, technical support and coordination of activities essential to make their health systems better prepared against this threat.

In this section we have illustrated and validated the theoretical results of the previous sections by simulations. 

\section{Conclusion and discussions}\label{Conclusion}
In this paper we have studied a stochastic SEIRS epidemic model, with a disease related death, in a population which grows exponentially without the disease. We assumed that initially, the population process is a super-critical linear birth and death process with birth rate $\lambda$ and death rate $\mu$. 

We  have derived, the basic reproduction number $R_0$, the Malthusian parameter $\alpha$ and the probability of minor outbreak $\pi$ assuming that the initial size $n$ of the population tends to infinity. Considering the deterministic model, we derived the threshold quantity $R_1$ for the \textit{fractions}. 

If $R_0\leq1$, then the disease dies out with probability $1$. That is, there is no possibility of major outbreak if $R_0\leq1$. In this case, the population remains a super-critical process.

If $R_0>1$ then, with a positive probability, the epidemic can take off. If the epidemic takes off, then the \textit{number} of the infected (exposed or infectious) individuals  grows exponentially with rate $\alpha$. If $0<\alpha\leq\lambda-\mu$, then the \textit{sizes} of all the compartments grow exponentially while the \textit{fraction} of the infected individuals goes to  $0$. The \textit{number} of infected people grows, but at a lower rate than the population, implying that the \textit{fraction} infected becomes negligible. If $\alpha>\lambda-\mu$, then the \textit{number} of infected individuals grows initially with a rate that is larger than the population growth rate, and different scenarios are possible. Due to the additional death rate $\sigma$ in the infectious compartment, the population will go on growing but with a lower rate, or the population will become a sub-critical branching process and thus have a decreasing size. In the latter case the population vanishes in the deterministic model while in the stochastic one, what happens when the numbers become low is random. Both the disease and the population could die out, or the population could grow again after the extinction of the epidemic.

We have illustrated and validated the theoretical results by simulations. These simulations show the similarities and the differences between the stochastic model and the corresponding deterministic model. If $R_0>1$, then in the deterministic model, the epidemic will invade the population surely; while in the stochastic model, with a positive probability $\pi$ the disease vanishes. When $R_2<1$, the population vanishes surely in the deterministic model; whereas in the stochastic model, it can regrow exponentially after the extinction of the epidemic. One need to remember that an epidemic is always a stochastic process, and the deterministic model fits only when we have a large community with a large number of infectious individuals \cite{TomB}.

For some diseases (e.g. influenza, measles), the susceptibility and the infectiousness vary with the age of the individuals. Therefore it is more realistic to consider an heterogeneous population as in \cite{Miller}. Furthermore adding seasonal forcing will increase realism for seasonal diseases. For other diseases like sexually transmitted diseases (STD), a close and/or long contact is required for transmission. Then the dynamics of the epidemic are linked to the special network in the host population \cite{Brauer, Frank, Miller}.  Due to the development of the migration of populations, an epidemic starting in one location can be exported to another one quickly, then a meta-population model  is convenient to find the conditions for a global health security \cite{Ball, Colizza}. Under the leadership of the World Health Organization (WHO) many policies (vaccination, medication, quarantine, etc.) are implemented to prevent major outbreak of epidemics. Nevertheless the infectious diseases remain a serious threat for Humanity. Another step towards realism is hence to extend this model by adding vaccination \cite{TomB, Greenhalgh} and treatment \cite{Kumar}. This should allow to find the optimal control strategy to realize the herd immunity, that is to prevent the major outbreaks of infectious diseases.
\section*{Acknowledgments}
We are grateful  to the  International Science Program (ISP) of Uppsala University  and the Swedish International Development Agency (SIDA) for their financial support.

\end{document}